\title{Towards a Universal Approach for Monotonic Searchability in Self-Stabilizing Overlay Networks\footnote{This work was partially supported by the German Research Foundation (DFG) within the Collaborative Research Center ``On-The-Fly Computing'' (SFB 901).}}
\titlerunning{Towards a Universal Approach for Monotonic Searchability}
\author[1]{Christian Scheideler}
\author[2]{Alexander Setzer}
\author[3]{Thim Strothmann}
\affil[1]{Paderborn University\\
  Fürstenallee 11, Paderborn, Germany\\
  \texttt{scheideler@uni-paderborn.de}}
\affil[2]{Paderborn University\\
  Fürstenallee 11, Paderborn, Germany\\
  \texttt{asetzer@mail.uni-paderborn.de}}
\affil[3]{Paderborn University\\
  Fürstenallee 11, Paderborn, Germany\\
  \texttt{thim@mail.uni-paderborn.de}}
\authorrunning{C. Scheideler and A. Setzer and T. Strothmann}
\subjclass{C.2.4 Distributed Systems}
\keywords{Monotonic Self-Stabilization, Topological Self-Stabilization, Overlay Networks}
\algrenewcommand{\algorithmiccomment}[1]{$\rhd$ #1}
\algrenewcommand\algorithmicprocedure{\textbf{action}}
\newcommand{\impldelegate}[1]{\textsc{ImplDelegate}(\ensuremath{#1})\xspace}
\newcommand{\timeout}{\textsc{Timeout}\xspace}
\newcommand{\search}[1]{\textsc{Search}(\ensuremath{#1})\xspace}
\newcommand{\initsearch}[1]{\textsc{InitiateNewSearch}(\ensuremath{#1})\xspace}
\newcommand{\probe}[1]{\textsc{Probe}(\ensuremath{#1})\xspace}
\newcommand{\fastprobe}[1]{\textsc{FastProbe}(\ensuremath{#1})\xspace}
\newcommand{\psuccess}[1]{\textsc{ProbeSuccess}(\ensuremath{#1})\xspace}
\newcommand{\pfail}[1]{\textsc{ProbeFail}(\ensuremath{#1})\xspace}
\newcommand{\pisf}{\ensuremath{\mathcal{P_{ISF}}}\xspace}
\newcommand{\pidf}{\ensuremath{\mathcal{P_{IDF}}}\xspace}
\newcommand{\templeft}[1][]{\ensuremath{Temp_{L}\ifthenelse{\isempty{#1}}{}{(#1)}}\xspace}
\newcommand{\tempright}[1][]{\ensuremath{Temp_{R}\ifthenelse{\isempty{#1}}{}{(#1)}}\xspace}
\newcommand{\delegateREQ}[1]{\textsc{DelegateREQ}(#1)\xspace}
\newcommand{\delegateACK}[1]{\textsc{DelegateACK}(#1)\xspace}
\newcommand{\mss}{$ms$-sufficient\xspace}
\newcommand{\invar}{\ensuremath{\mathcal{I}}\xspace}
\newcommand{\mdl}{\textsc{mdl}\xspace}
\newcommand{\skipp}{\ensuremath{SKIP^{\mbox{+}}}}
\begin{document}

\maketitle

\begin{abstract}
For overlay networks, the ability to recover from a variety of problems like membership changes or faults is a key element to preserve their functionality.
In recent years, various self-stabilizing overlay networks have been proposed that have the advantage of being able to recover from \emph{any} illegal state.
However, the vast majority of these networks cannot give any guarantees on its functionality while the recovery process is going on.
We are especially interested in \emph{searchability}, i.e., the functionality that search messages for a specific identifier are answered successfully if a node with that identifier exists in the network.
We investigate overlay networks that are not only self-stabilizing but that also ensure that \emph{monotonic} searchability is maintained while the recovery process is going on, as long as there are no corrupted messages in the system. 
More precisely, once a search message from node $u$ to another node $v$ is successfully delivered, all future search messages from $u$ to $v$ succeed as well. 
Monotonic searchability was recently introduced in OPODIS 2015, in which the authors provide a solution for a simple line topology.
 We present the first \emph{universal} approach to maintain monotonic searchability that is applicable to a wide range of topologies.
As the base for our approach, we introduce a set of primitives for manipulating overlay networks that allows us to maintain searchability and show how existing protocols can be transformed to use theses primitives.
 We complement this result with a generic search protocol that together with the use of our primitives guarantees monotonic searchability.
 As an additional feature, searching existing nodes with the generic search protocol is as fast as searching a node with any other fixed routing protocol once the topology has stabilized. 
 \end{abstract}

 \ 

\hfill
\newline
This is the full version of a correspondent paper appearing in the DISC 2016 proceedings.
 
\newpage

\section{Introduction}
The Internet has opened up tremendous opportunities for people to interact and exchange information.
 Particularly popular ways to interact are peer-to-peer systems and social networks. 
 For these systems it is very important that they are highly available. 
 However, once these systems become large enough, changes and faults are not an exception but the rule.
Therefore, mechanisms are needed that ensure that whenever problems occur, they are quickly repaired, and all parts of the system that are still functional should not be affected by the repair process. 
Protocols that are able to recover from arbitrary states are also known as \emph{self-stabilizing} protocols.

Since the seminal paper by Dijkstra in 1974~\cite{Dijkstra74}, self-stabilizing protocols have been investigated for many classical problems including leader election, consensus, matching, clock synchronization and token distribution problems.
Recently, also various protocols for self-stabilizing overlay networks have been proposed (e.g., \cite{DolevT2013,corona,JacobRSS2012,KniesburgesKS12}). 
However, for all of these protocols it is only known that they \emph{eventually} converge to the desired solution, but the convergence process is not necessarily \emph{monotonic}. 
In other words, it is not ensured for two points in time $t, t'$ with $t<t'$ that the functionality of the topology at time $t'$ is better than the functionality at time $t$.

In this paper, we continue our research started in~\cite{SetzerSSS14} and investigate protocols for self-stabilizing overlay networks that guarantee the \emph{monotonic} preservation of a characteristic that we call \emph{searchability}, i.e., once a search message from node $u$ to another node $v$ is successfully delivered, all future search messages from $u$ to $v$ succeed as well. 
Searchability is a useful and natural characteristic for an overlay network since searching for other participants is one of the most common tasks in real-world networks.
Moreover, a protocol that maintains monotonic searchability has the huge advantage that in every state, even if the self-stabilization process has not converged yet, the parts of the topology that are already in a legal state can be used for search requests.

Instead of focusing on a specific topology, as done in~\cite{SetzerSSS14}, we present an approach that is aimed at universality. 
As a base, we present a set of primitives for overlay network maintainance for which we prove that they enable monotonic searchability.
On top of that, we give a generic search protocol that, together with a protocol that solely uses these primitives, guarantees monotonic searchability.
Additionally, we show that existing self-stabilizing overlay network protocols can be transformed to use our primitives.

To the best of our knowledge, we are the first to investigate monotonic searchability as an attempt to explore maintaining properties beyond the traditional "time and space" metrics during stabilization.  We believe that the question of how to maintain monotonic searchability and similar properties during topological stabilization has a lot of potential for future research. 

\subsection{Model}
We consider a distributed system consisting of a fixed set of nodes in which each node has a unique reference and a unique immutable numerical identifier (or short id).
The system is controlled by a protocol that specifies the variables and actions that are available in each node. 
In addition to the protocol-based variables there is a system-based variable for each node called \emph{channel} whose values are sets of messages. 
We denote the channel of a node $u$ as $u.Ch$ and it contains all incoming messages to $u$. 
Its message capacity is unbounded and messages never get lost.
A node can add a message to $u.Ch$ if it has a reference of $u$.
Besides these channels there are no further communication means, so only point-to-point communication is possible.

There are two types of \emph{actions} that a protocol can execute. 
The first type has the form of a standard procedure $\langle label\rangle (\langle parameters \rangle): \langle command \rangle$, where $label$ is the unique name of that action, $parameters$ specifies the parameter list of the action, and $command$ specifies the statements to be executed when calling that action. 
Such actions can be called locally (which causes their immediate execution) and remotely. 
In fact, we assume that every message must be of the form $\langle label \rangle (\langle parameters \rangle)$, where $label$ specifies the action to be called in the receiving node and $parameters$ contains the parameters to be passed to that action call. 
All other messages are ignored by nodes.  
The second type has the form $ \langle label\rangle: \langle guard \rangle \longrightarrow \langle command \rangle$, where $label$ and $command$ are defined as above and $guard$ is a predicate over local variables. 
We call an action whose guard is simply \textbf{true} a \emph{timeout} action.

The \emph{system state} is an assignment of values to every variable of each node and messages to each channel. 
An action in some node $u$ is \emph{enabled} in some system state if its guard evaluates to \textbf{true}, or if there is a message in $u.Ch$ requesting to call it.
In the latter case, when the corresponding action is executed, the message is processed (in which case it is removed from $u.Ch$).
An action is \emph{disabled} otherwise. 
Receiving and processing a message is considered as an atomic step.

A \emph{computation} is an infinite fair sequence of system states such that for each state $S_i$, the next state $S_{i+1}$ is obtained by executing an action that is enabled in $S_i$. 
This disallows the overlap of action execution, i.e., action execution is \emph{atomic}. 
We assume \emph{weakly fair action execution} and \emph{fair message receipt}. 
Weakly fair action execution means that if an action is enabled in all but finitely many states
of a computation, then this action is executed infinitely often. 
Note that a timeout action of a node is executed infinitely often. 
Fair message receipt means that if a computation contains a state in which there is a message in a channel of a node that enables an action in that node, then that action is eventually executed
with the parameters of that message, i.e., the message is eventually
processed. 
Besides these fairness assumptions, we place no bounds on message propagation delay or relative nodes execution speeds, i.e., we allow fully asynchronous computations and non-FIFO message delivery. 
A \emph{computation suffix} is a sequence of computation states past a particular state of this computation. 
In other words, the suffix of the computation is obtained by removing the initial state and finitely
many subsequent states. 
Note that a computation suffix is also a computation.
We say a state $S'$ is reachable from a state $S$ if starting in $S$ there is a sequence of action executions such that we end up in state $S'$.
We use the notion $S < S'$ as a shorthand to indicate that the state $S$ happened chronologically before $S'$.

We consider protocols that do not manipulate the internals of node references. Specifically, a protocol is \emph{compare-store-send} if the only operations that it executes on node references is comparing them, storing them in local memory and sending them in a message. 

In a compare-store-send protocol, a node may learn a new reference of a node only by receiving it in a message. 
A compare-store-send protocol cannot create new references. 
It can only operate on the references given to it.

The overlay network of a set of nodes is determined by their knowledge of each other. 
We say that there is a (directed) \emph{edge} from $a$ to $b$, denoted by $(a,b)$, if node $a$ stores a reference of $b$ in its local memory or has a message in $a.Ch$ carrying the reference of $b$. 
In the former case, the edge is called \emph{explicit}, and in the latter case, the edge is called \emph{implicit}.
Messages can only be sent via explicit edges.
Note that message receipt converts an implicit edge to an explicit edge since the message is in the local memory of a node while it is processed. 
With $NG$ we denote the directed \emph{network (multi-)graph} given by the explicit and implicit edges.
$ENG$ is the subgraph of $NG$ induced by only the explicit edges. 
A \emph{weakly connected component} of a directed graph $G$ is a subgraph of $G$ of maximum size so that for any two nodes $u$ and $v$ in that subgraph there is a (not necessarily directed) path from $u$ to $v$. 
Two nodes that are not in the same weakly connected component are \emph{disconnected}.
We assume that the positions of the processes in the topology are encapsulated in their identifier and that there is a distance measure which is based on the identifiers of the processes and which can be checked locally.
That is, for a given identifier $ID$, each node $u$ can decide for each neighbor $v$ whether $v$ is closer to the node $w$ with $id(w)=ID$ if such a node exists (we also say that $id(v)$ is closer to $ID$ than $id(u)$ or $dist(id(v),ID) < dist(id(u),ID)$).
For a node $u$, we define $R(u,ID)$ as the set containing $u$ and all processes $v$ for which there is a path $Q$ from $u$ to $v$ via explicit edges such that for each edge $(a,b)$ that is traversed in $Q$ it holds that $dist(id(b),ID)< dist(id(a),ID)$.
Furthermore, for a set $U$, we define $R(U,ID) := \bigcup_{u \in U}R(u,ID)$.

In this paper we are particularly concerned with search requests, i.e., \search{v,destID} messages that are routed along $ENG$ according to a given search protocol, where $v$ is the sender of the message and $destID$ is the identifier of a node we are looking for.
We assume that \search{} requests are initiated locally by an (possibly user controlled) application operating on top of the network.
Note that $destID$ does not need to be an id of an existing node $w$, since it is also possible that we are searching for a node that is not in the system. 
If a \search{v,destID} message reaches a node $w$ with $id(w)=destID$, the search request \emph{succeeds}; if the message reaches some node $u$ with $id(u) \ne destID$ and cannot be forwarded anymore according to the given search protocol, the search request \emph{fails}.

\subsection{Problem Statement}

A protocol is \emph{self-stabilizing} if it satisfies the following two properties as long as no transient faults occur.
\begin{description}
\item[\emph{Convergence:}] starting from an arbitrary system state, the protocol is guaranteed to arrive at a legitimate state.
\item[\emph{Closure:}] starting from a legitimate state the protocol remains in legitimate states thereafter.
\end{description}
A self-stabilizing protocol is thus able to recover from transient faults regardless of their nature. 
Moreover, a self-stabilizing protocol does not have to be initialized as it eventually starts to behave correctly regardless of its initial state. 
In \emph{topological self-stabilization} we allow self-stabilizing protocols to perform changes to the overlay network $NG$. 
A legitimate state may then include a particular graph topology or a family of graph topologies.
We are interested in self-stabilizing protocols that stabilize to \emph{static topologies}, i.e., in every computation of the protocol that starts from a legitimate state, $ENG$ stays the same, as long as the node set stays the same.

In this paper we are not focusing on building a self-stabilizing protocol for a particular topology.
Instead we are interested in providing a reliable protocol for searching in a wide range of topologies that fulfill certain requirements. 
Traditionally, search protocols for a given topology were only required to deliver the search messages reliably once a legitimate state has been reached.
However, it is not possible to determine when a legitimate state has been reached.
Furthermore, searching reliably during the stabilization phase is much more involved.
We say a self-stabilizing protocol satisfies \emph{monotonic searchability} according to some search protocol $R$ if it holds for any pair of nodes $v,w$ that once a \search{v,id(w)} request (that is routed according to $R$) initiated at time $t$ succeeds, any \search{v,id(w)} request initiated at a time $t' > t$ will succeed.
We do not mention $R$ if it is clear from the context.
A protocol is said to satisfy \emph{non-trivial} monotonic searchability if (i) it satisfies monotonic searchability and (ii) every computation of the protocol contains a suffix such that for each pair of nodes $v,w$, \search{v,id(w)} requests will succeed if there is a path from $v$ to $w$ in the target topology.
Throughout the paper we will only investigate non-trivial monotonic searchability.
Consequently, whenever we use the term monotonic searchability in the following, we implicitly refer to non-trivial monotonic searchability.

A \emph{message invariant} is a predicate of the following form:
If there is a message $m$ in the incoming channel of a node, then a logical predicate $P$ must hold.
A protocol may specify one or more message invariants.
An arbitrary message $m$ in a system is called \emph{corrupted} if the existence of $m$ violates one or multiple message invariants.
A state $S$ is called \emph{admissible} if there are no corrupted messages in $S$.
We say a (self-stabilizing) protocol \emph{admissibly satisfies} a predicate $P$ if the following two conditions hold: (i) the predicate is satisfied in all computation suffixes of the protocol that start from admissible states, and (ii) every computation of the protocol contains at least one admissible state.
A protocol \emph{unconditionally satisfies} a predicate if it satisfies this predicate starting from any state.

The following was proven in~\cite{SetzerSSS14}:
\begin{lemma}\label{lem:admissible_message_necessary_for_monotonic_searchability}
No self-stabilizing compare-store-send protocol can unconditionally satisfy monotonic searchability.
\end{lemma}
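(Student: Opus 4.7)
The plan is to proceed by contradiction, constructing an initial state together with an asynchronous schedule in which a search message is successfully delivered but a later search from the same source to the same target fails.

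First, I use non-triviality to select a legitimate state $L$ of the assumed protocol and a pair of distinct nodes $u,w$ such that $\search{u,id(w)}$ succeeds in $L$; if no such pair existed for any legitimate state, the non-triviality clause of monotonic searchability would already be violated, and we would be done. I then modify $L$ to obtain an initial state $S_0$ by (i) deleting every reference that $u$ stores in local memory and every message in $u.Ch$ that carries a reference, so that $u$ has no outgoing edge in $NG$, and (ii) planting a single hand-crafted message $\search{u,id(w)}$ directly in $w.Ch$. Since the rest of the system is untouched and $u$ keeps whatever incoming edges it had in $L$, $S_0$ remains weakly connected.

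Next, I exhibit a computation starting from $S_0$ that breaks monotonic searchability. By fair message receipt, $w$ eventually processes the planted message, and since its target identifier matches $id(w)$, this constitutes a successful delivery of a search from $u$ to $w$ at some time $t_1$. Exploiting the asynchrony granted by the model, I then delay every message addressed to $u$ until a later time $t_3$. Because the protocol is compare-store-send, $u$ cannot fabricate new references on its own; hence throughout the interval $(t_1,t_3)$ node $u$ has no outgoing edge whatsoever. At some $t_2\in(t_1,t_3)$ the application initiates a fresh $\search{u,id(w)}$ at $u$; since $u$ has no neighbor to forward the message to, the search cannot progress past $u$ and therefore \emph{fails} by the model's definition. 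This contradicts monotonic searchability.

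The main obstacle is to argue cleanly that (a) the planted corrupted message genuinely counts as a ``search message from $u$ to $w$ that is successfully delivered'' in the sense of the definition of monotonic searchability, and (b) the schedule that starves $u$ of incoming traffic for a bounded window is consistent with the fairness assumptions of the model. Both points rely on the facts that actions fire purely on the syntactic form of incoming messages (so a corrupted $\search{u,id(w)}$ is processed exactly like a legitimate one) and that fair message receipt together with weakly fair action execution still allow arbitrarily long, but finite, delivery delays, which is all the adversary needs to squeeze $t_2$ into $(t_1,t_3)$.
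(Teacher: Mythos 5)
The paper itself does not prove this lemma --- it imports it from~\cite{SetzerSSS14} --- so there is no in-paper proof to compare against; I therefore judge your argument on its own terms, and it has a genuine gap at the decisive step. You claim that the \search{u,id(w)} request initiated at $t_2$ ``cannot progress past $u$ and therefore fails by the model's definition.'' But the model only declares a search failed when it ``cannot be forwarded anymore \emph{according to the given search protocol}~$R$,'' and the lemma quantifies over all protocols and all $R$. A search protocol is free to buffer the request at $u$ and retry --- this is exactly what the paper's own generic search protocol does --- and monotonic searchability imposes no deadline: the obligation is only that the later request ``will succeed'' eventually. In your state $S_0$ the other nodes still hold references of $u$, so $S_0$ is weakly connected and the self-stabilization guarantee will eventually restore outgoing edges at $u$ and a directed path from $u$ to $w$; a patient $R$ then delivers the buffered request, and no contradiction arises. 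Your adversarial starvation of $u$ until $t_3$ does not help, because $t_3$ is finite and fairness only lets you postpone, not prevent, eventual delivery. So your construction refutes only those search protocols that declare failure the moment the source has no suitable neighbor; it does not refute the protocol class the lemma is about.

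There is also a secondary soundness issue you flag but do not resolve: monotonic searchability is defined over requests \emph{initiated} at some time $t$, whereas your first ``success'' is a hand-planted message that was never initiated by \initsearch{}. Whether its delivery triggers the monotonicity obligation is exactly the kind of definitional point an impossibility proof must nail down, not assume. A repair needs a different mechanism for making the \emph{later} request permanently unable to succeed (or forcibly dropped) despite self-stabilization --- for instance, arranging that the corrupted first success concerns a pair that is not connected in the target topology, or using corrupted protocol-internal messages to force a legitimately initiated search to be discarded --- rather than relying on a temporary lack of outgoing edges at the source.
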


Consequently, to prove monotonic searchability for a protocol (according to a given search protocol $R$) it is sufficient to show that: (i) in every computation of the protocol that starts from an admissible state, every state is admissible, (ii) in every computation of the protocol there is an admissible state, and (iii) the protocol satisfies monotonic searchability according to $R$ in every computation that starts from an admissible state.
Note that we have not defined any invariants yet and it is possible to pick invariants such that the set of admissible states equals the set of legitimate states, in which the problem becomes trivial.
However, for the invariants we provide, any initial topology can be an admissible state.
In particular, as long as no corrupt messages are initially in the system, our protocols satisfy monotonic searchability throughout the computation.

We will show that a broad class of existing self-stabilizing protocols can be transformed to satisfy monotonic searchability.
More specifically, we will consider protocols that fulfill the \emph{\mdl property}, i.e., for any action $a$ of the protocol it holds that
(i) a node $u$ executing action $a$ will always keep a reference of another node $v$ in its local memory if an edge $(u,v)$ is part of the final topology, and
(ii) if a node $u$ executing action $a$ in state $S$ decides not to keep a reference of another node $v$ in its local memory, every other action of the protocol executed by $u$ in a subsequent state will decide to not keep the reference of $v$, and
(iii) a node $u$ executing action $a$ decides \emph{deterministically} and solely based on its \emph{local} memory whether to send and where to send the reference of $v$,
(iv) in every legitimate state, for every reference of a node $v$ contained in a message $m$ in the channel of a node $u$ (i.e., for any implicit edge $(u,v)$), there are fixed cycle-free paths $(u = u_1, u_2, \dots, u_k)$ such that $u_i$ sends the reference of $v$ to $u_{i+1}$, and $u_k$ has an explicit edge $(u_k,v)$ (note that there is only one path if the reference is never duplicated), i.e., the reference of $v$ is forwarded along fixed paths until it finally fuses with an existing reference, and
(v) any stable edge that may be traversed by the search protocol and any implicit edge that results from a delegation of an explicit edge that may have been traversed by the search protocol, is only delegated to a node whose distance (in the underlying distance metric) is closer to the target than the corrent node.
 Informally speaking, the first two properties imply that the protocol \emph{monotonically} converges to its desired topology, since edges of the topology are always kept and edges that are not part of the topology are obviated over time.
 The last property implies that in legitimate states, all implicit edges will eventually merge with explicit edges.
 Note that the \mdl property is generally not a severe restriction. 
 Most existing protocols that stabilize to static topologies naturally fulfill this property (see Section~\ref{sec:examples}).

In addition to the \mdl property and the assumption that the final topology is static, we have one more condition on the topologies and their distance measures.
The generic search protocol we will use to achieve monotonic searchability assumes that in the target topology for every pair of nodes $u,v$ within the same connected component, there is a path of explicit edges from $u$ to $v$ with the property that each edge on the path strictly decreases the distance to $v$ (i.e., for each edge $(a,b)$ that is traversed in the path, $dist(id(b),id(v))< dist(id(a),id(v))$).
Note that many topologies naturally fulfill this property (in particular, whenever the distance is defined as the number of nodes on a shortest path).

\subsection{Our contribution}
To the best of our knowledge, we are the first to solve the problem of searching reliably during the stabilization phase in self-stabilizing topologies.
Although routing with a low dilation is a major motivation behind the use of overlay topologies, prior to this work, one could not rely on the routing paths in such topologies\footnote{Note that \cite{SetzerSSS14} did solve the problem of monotonic searchability for the list, but the list has a worst-case routing time of $\Omega(n)$, thus not offering a low dilation.}: In previous approaches, it can happen that a node $u$ is able to send a message to a node $v$, while it is unable to do so in a later state, only because the system has not stabilized yet (which is not locally detectable by the nodes).
In our solution, once a search message from a node $u$ has successfully arrived at a node $v$, every further search message from $u$ to $v$ will also arrive at its destination, regardless of whether the system has fully stabilized or not.
%Thus, we claim that our work represents a significant improvement over prior results.

We present a universal set of primitives for manipulating edges that protocols should use and a simple generic search protocol, which together satisfy monotonic searchability.
Moreover, we provide a general description of how a broad class of self-stabilizing protocols for overlay networks can be transformed such that they use these primitives, thus satisfying monotonic searchability afterwards.

Our results of Section~\ref{sec:preliminaries} may be of independent interest, where we reinvestigate the fundamental primitives for manipulating edges introduced in~\cite{KoutsopoulosSS15} and strengthen the results concerning the universality of these primitives.

\subsection{Outline of the paper}
The paper is organized as follows:
We give a short overview of related work in Section~\ref{sec:related_work}.
In Section~\ref{sec:preliminaries}, we describe primitives for manipulating edges introduced in~\cite{KoutsopoulosSS15} that are the base for our new primitives for searchability.
In Section~\ref{sec:new_primitives}, we describe these new primitives that one should use in a protocol to enable monotonic searchability and show that conventional protocols for the self-stabilization of a topology can be transformed into ones using the new primitives.
In Section~\ref{sec:generic_search_protocol}, we present a generic search protocol according to which such protocols satisfy monotonic searchability and prove its correctness.
In Section~\ref{sec:examples}, we give examples on how to apply our results to existing topologies, namely the list and the \skipp graph.
A conclusion and outlook (Section~\ref{sec:conclusion}) mark the end of the paper.

\section{Related work}\label{sec:related_work}
The idea of self-stabilization in distributed computing was introduced in a classical paper by E.W. Dijkstra in 1974~\cite{Dijkstra74}, in which he investigated the problem of self-stabilization in a token ring. 
In order to recover certain network topologies from any weakly connected state, researchers started with simple line and ring networks (e.g.,~\cite{ShakerR05,self-stabilizing-list,self-stabilizing-list2}).
Over the years more and more network topologies were considered, ranging from skip lists and skip graphs~\cite{corona,JRSST09}, to expanders~\cite{DolevT2013}, 
hypertrees and double-headed radix trees~\cite{DolevK08,AspnesW07}, and small-world graphs~\cite{KniesburgesKS12}.
Also a universal algorithm for topological self-stabilization is known~\cite{DBLP:journals/tcs/BernsGP13}.

In the last 20 years many approaches have been investigated that focus on maintaining safety properties during convergence phase of self-stabilization, e.g. snap-stabilization~\cite{DBLP:journals/dc/BuiDPV07,DBLP:journals/jpdc/DelaetDNT10}, super-stabilization~\cite{DBLP:journals/cjtcs/DolevH97}, safe convergence~\cite{DBLP:conf/ipps/KakugawaM06} and self-stabilization with service guarantee~\cite{DBLP:conf/europar/JohnenM10}.
A protocol is snap-stabilizing if it always behaves according to its specification independent of its initial configuration. 
Snap-stabilization has a user-centric safety property (whereas the other approaches are system-centric), i.e. it ensures that the answer to a properly initiated user request by the protocol is correct. 
Safe convergence ensures that (1) the system quickly converges to a safe configuration, and (2) the safety property stays satisfied during the stabilization under protocol actions. 
However, external disruptions are not handled in safe convergence. 
A super-stabilizing protocol guarantees that (1) starting from a legitimate configuration, a safety property is preserved after only one specific topology change, and (2) the safety property
is maintained during recovering to a legitimate configuration assuming that
no more topology change occurs during stabilization phase. 
Self-stabilization with service guarantee provides and maintains the safety property even before
stabilization, unlike super-stabilization. 

Closest to our work is the notion of \emph{monotonic convergence} by Yamauchi and Tixeuil~\cite{YamauchiT10}. A self-stabilizing protocol is monotonically converging if every change done by a node $p$ makes the system approach a legitimate state and if every node changes its output only once. 
The authors investigate monotonically converging protocols for different classical distributed problems (e.g., leader election and vertex coloring) and focus on the amount of non-local information that is needed for them.

Research on monotonic searchability was initiated in~\cite{SetzerSSS14}, in which the authors proved that it is impossible to satisfy monotonic searchability if corrupted messages are present. In addition, they presented a self-stabilizing protocol for the line topology that is able to satisfy monotonic searchability.

\section{Primitives for Topology Maintenance}
\label{sec:preliminaries}
An important property for every overlay management protocol is that weak connectivity is never lost by its own actions. 
Therefore, it is highly desirable that every node only executes actions that preserve weak
connectivity. 
Koutsopoulos et al.~\cite{KoutsopoulosSS15} introduced the following four primitives for manipulating edges in an overlay network.

\begin{description}
\item[Introduction]  If a node $u$ has a reference of two nodes $v$ and $w$ with $v \neq w$, $u$ \emph{introduces} $w$ to $v$ if $u$ sends a message to $v$ containing a reference of $w$ while keeping the reference.
\item[Delegation] If a node $u$ has a reference of two nodes $v$ and $w$ s.t. $u,v,w$ are all different, then $u$ \emph{delegates} $w$'s reference of $v$ if $u$ sends a message to $v$ containing a reference of $w$ and deletes the reference of $w$.
\item[Fusion] If a node $u$ has two references $v$ and $w$ with $v=w$, then $u$ \emph{fuses} the two references if it only keeps one of these references.
\item[Reversal] If a node $u$ has a reference of some other node $v$, then $u$ \emph{reverses} the connection if it sends a reference of itself to $v$ and deletes its reference of $v$.
\end{description}

\begin{figure}[htb]
\centering 
\subfloat[Introduction primitive]{
\includegraphics[width=0.4\textwidth]{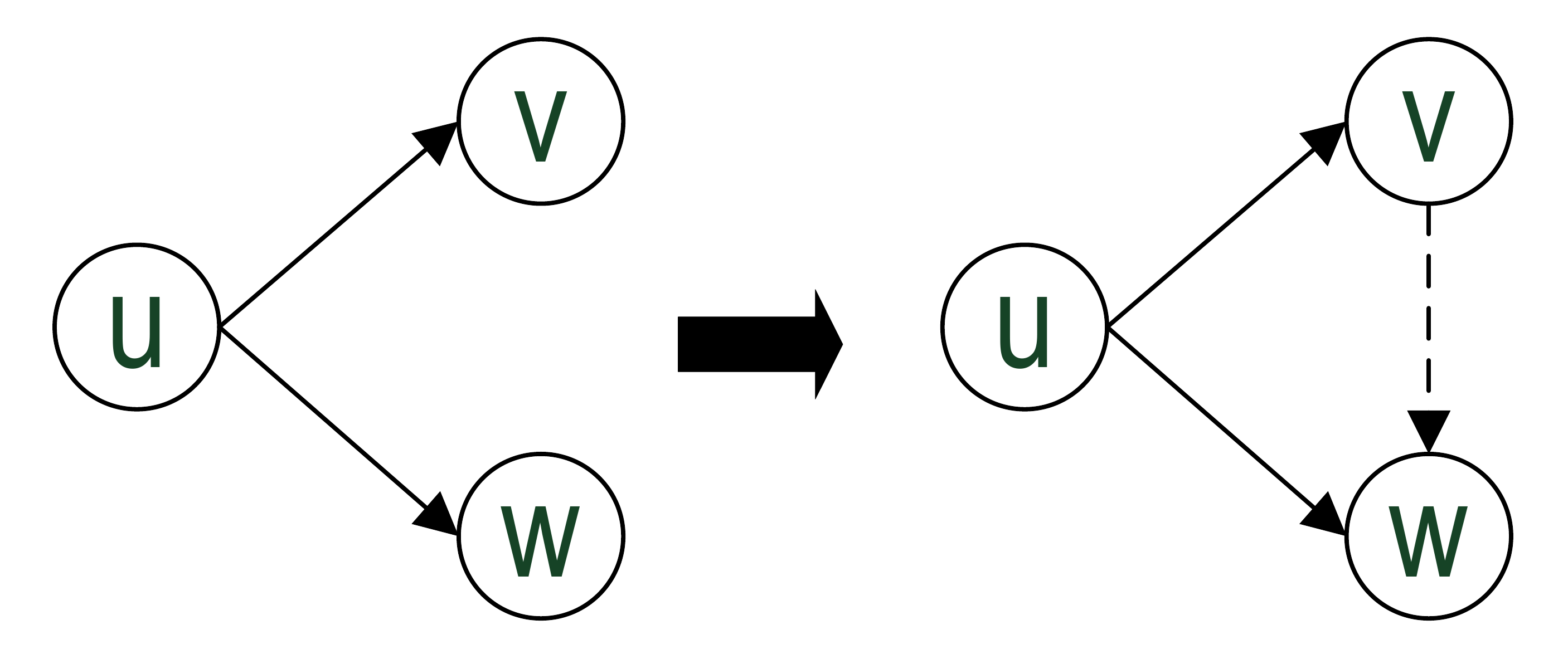}
\label{fig:prim-intro}
}
\quad
\
\subfloat[Delegation primitive]{
\includegraphics[width=0.4\textwidth]{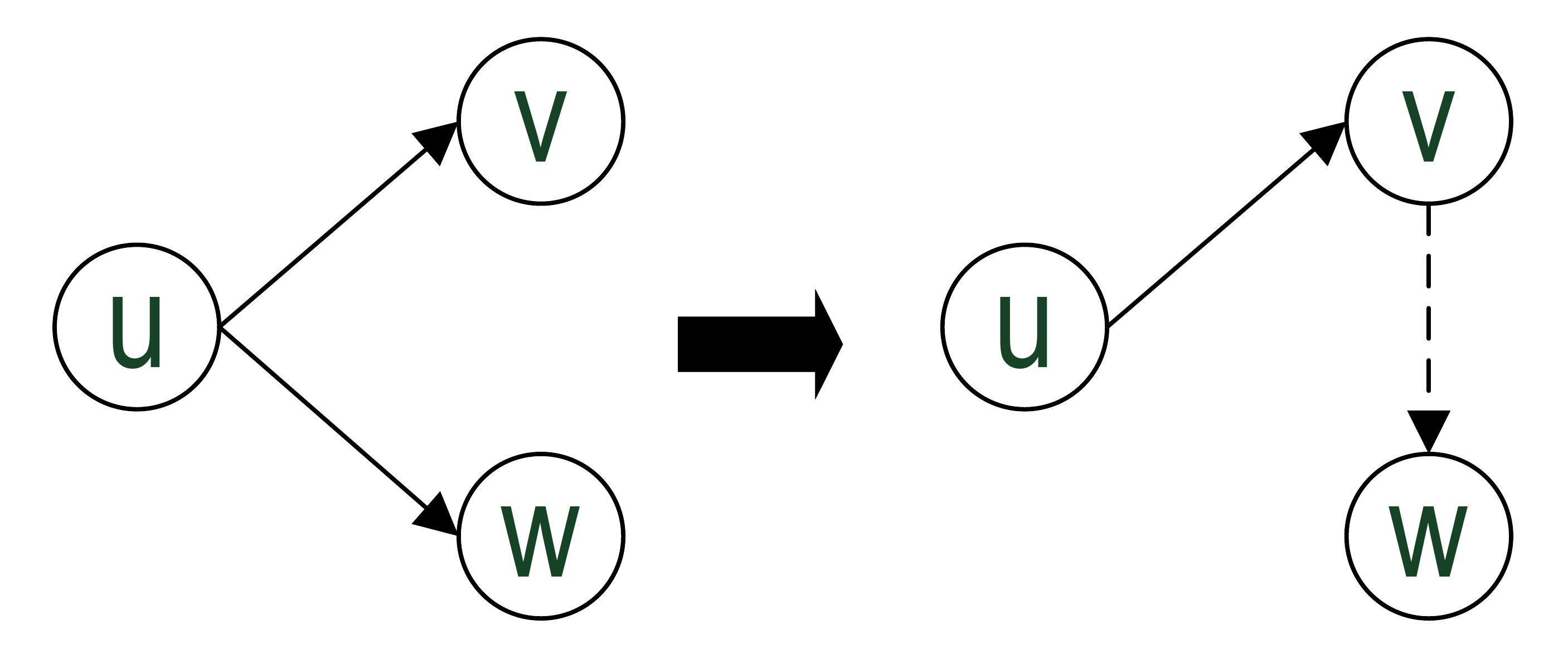}
\label{fig:prim-dele}
}
\\
\subfloat[Fusion primitive]{
\includegraphics[width=0.4\textwidth]{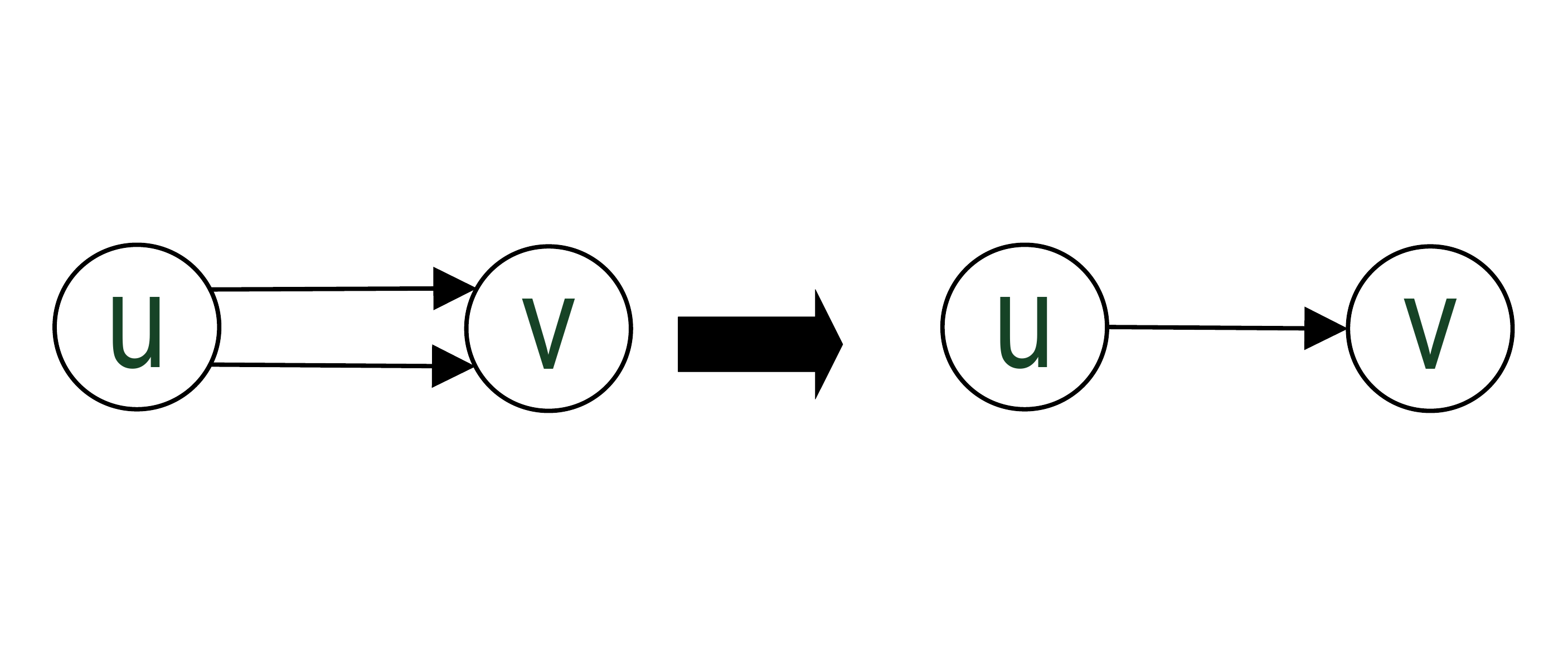}
\label{fig:prim-fuse}
}
\quad
\
\subfloat[Reversal primitive]{
\includegraphics[width=0.4\textwidth]{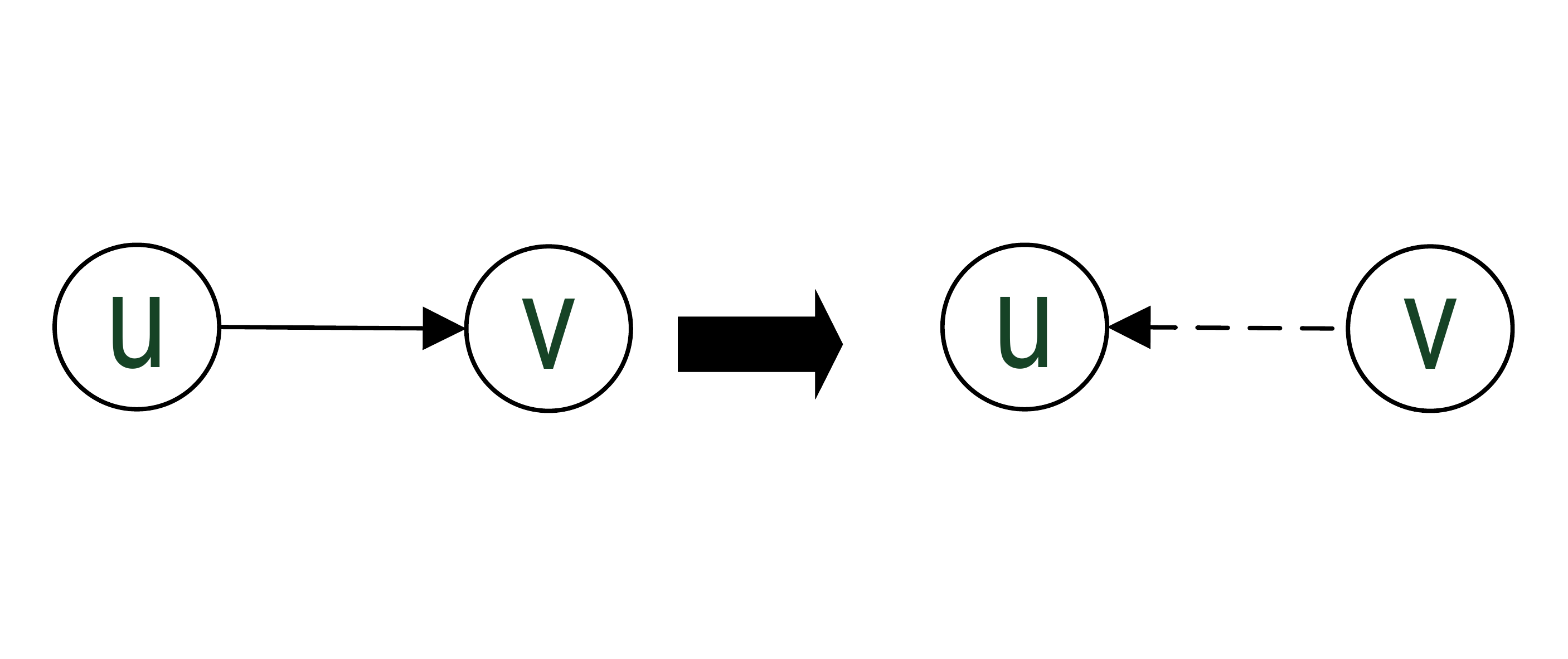}
\label{fig:prim-reverse}
}
\caption{The four primitives in pictures. Explicit edges are drawn solid; implicit edges are dashed.}\label{fig:primitives}
\end{figure}

Note that the four primitives can be executed locally by every node in a wait-free fashion.
Furthermore, for the Introduction primitive, it is possible that $w=u$, i.e., $u$ introduces itself to $v$.
The authors show that these four primitives (which are visualized in Figure~\ref{fig:primitives}) are safe in a sense that they preserve weak connectivity (as long as there is no fault).
This implies that \emph{any} distributed protocol whose actions can be decomposed into these four primitives is guaranteed to preserve weak connectivity.

We define $\mathcal{IDF}$ as the set containing the first three primitives: Introduction, Delegation and Fusion.
Let \pidf denote the set of all distributed protocols where all interactions between processes can be decomposed into the primitives of $\mathcal{IDF}$.
According to~\cite{KoutsopoulosSS15} these protocols even preserve strong connectivity in a sense that for any pair of nodes $u,v$ with a directed path in $NG$ there will always be a directed path from $u$ to $v$ in $NG$.
To the best of our knowledge, all self-stabilizing topology maintenance protocols proposed so far (such as the list~\cite{ShakerR05,self-stabilizing-list,self-stabilizing-list2}, the Delaunay graph~\cite{JacobRSS2012}, etc.) satisfy this property.
Moreover, in~\cite{KoutsopoulosSS15}, the four primitives were shown to be \emph{universal}, i.e. the primitives allow one to get from any weakly connected graph $G=(V,E)$ to any other weakly connected graph $G'=(V,E')$ for $NG$.
In fact, only the first three primitives  (i.e., $\mathcal{IDF}$) are necessary to get from any weakly connected graph to any \emph{strongly} connected graph, which is sufficient in our case (\cite{KoutsopoulosSS15} denote this by \emph{weak universality}).
Note that the notion of universality for a set of primitives is not constructive, i.e., only \emph{in principle} the primitives allow one to get from any weakly connected graph to any other weakly connected graph. 
We strengthen the results concerning universality of the primitives with the following theorem.
\begin{theorem}
\label{thm:realUnviversality}
 Any compare-store-send protocol that self-stabilizes to a static strongly-connected topology and preserves weak connectivity can be transformed such that the interactions between nodes can be decomposed into the primitives of $\mathcal{IDF}$.
\end{theorem}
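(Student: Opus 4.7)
The plan is to construct, given any compare-store-send protocol $P$ satisfying the hypotheses, a transformed protocol $P'$ whose inter-node interactions decompose entirely into Introduction, Delegation, and Fusion, while $P'$ still self-stabilizes to the same static strongly-connected target topology as $P$. First, I would analyze the atomic effects of $P$ on the network graph. Since $P$ is compare-store-send, it can neither fabricate nor modify references; the only graph-affecting operations at each atomic step are storing a just-received reference, sending a reference in an outgoing message, and removing a reference from local memory. Outgoing messages already admit an $\mathcal{IDF}$ decomposition: sending a reference $w$ from $u$ to $v$ while retaining $w$ is exactly an \emph{Introduction}, and sending while removing $w$ is exactly a \emph{Delegation}. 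Whenever storing a received reference yields two copies of the same id in $u$'s memory, a \emph{Fusion} step can be inserted to merge them. The only residual effect not captured by $\mathcal{IDF}$ is the \emph{silent deletion} of a reference $w$ by $u$ without sending $w$ anywhere.

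The central step of the transformation is therefore to eliminate silent deletions. Because $P$ preserves weak connectivity, any silent deletion of $w$ by $u$ occurs only when $w$ remains reachable in the post-deletion graph, which implies that either $u$ retains at least one other outgoing reference, or $u$ is kept weakly connected to $w$ through its incoming edges. In the first case, we replace the silent deletion in $P'$ by a Delegation of $w$ to a remaining outgoing neighbor chosen deterministically (for instance, the one whose id is closest to $id(w)$ under the distance metric). In the second case, $u$ simply retains $w$ in $P'$ rather than deleting it; this is strictly more conservative and remains $\mathcal{IDF}$-compatible. After these replacements, every action of $P'$ decomposes into a finite sequence of Introductions, Delegations, and Fusions.

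It remains to show that $P'$ still self-stabilizes to the same target topology as $P$. I would argue this via a simulation-style claim: every execution of $P'$ corresponds to an execution of $P$ on a perturbed initial state, where the only differences are that some references live longer, or travel along additional delegation paths, before being eliminated. Because $P$ is self-stabilizing from arbitrary initial configurations, any such extraneous references are spurious and will eventually be discarded by further actions of $P$; in $P'$ these discard steps become further Delegations that continue to propagate each extra reference until it collides with an explicit copy and is absorbed by Fusion. Since $P'$ uses only primitives of $\mathcal{IDF}$, strong connectivity is preserved throughout, and because the target topology is both static and strongly connected, every reference in flight eventually coincides with an explicit target edge at some node.

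The main obstacle I anticipate is exactly this last convergence argument. One must rule out the possibility that the extra references introduced by the delegation substitute create an endless ``carousel'' of delegations without ever matching an explicit edge for Fusion, or that $P$'s deterministic actions under $P'$ are taken on perpetually inconsistent local states. The combination of the self-stabilization hypothesis (which guarantees cleanup from arbitrary initial states) with the static, strongly-connected target (which ensures each node's local memory eventually contains the prescribed outgoing edges) is what closes this gap, but making this rigorous requires careful bookkeeping of how delegated references migrate through the network and merge with the explicit structure as $P'$ converges.
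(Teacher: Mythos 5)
There is a genuine gap, and it lies exactly where you anticipate it. The paper does not transform silent deletions into Delegations at all; it proves (its Lemma~\ref{lem:noDeletion}) that a compare-store-send protocol satisfying the hypotheses can never perform a silent deletion in the first place. The argument is an indistinguishability one: if a node $u$ ever reaches an internal state in which it deletes a reference of $v$, one can construct a different initial configuration in which $u$ has the same internal state but the edge $(u,v)$ is the only thing keeping $v$ weakly connected; since $u$'s decision is a deterministic function of its local state, it deletes the reference there too and disconnects the graph, contradicting the assumption that the protocol preserves weak connectivity. With deletions ruled out, every remaining graph-affecting operation is already an Introduction, Delegation, or Fusion (exactly the case analysis you give in your first paragraph), and no behavioural change to the protocol is needed, so nothing about convergence has to be re-proved.

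Your route, by contrast, modifies the protocol's semantics: replacing a deletion by a Delegation injects references that the original protocol expected to be gone, and ``simply retaining'' a reference alters $u$'s local state and hence all of its future deterministic decisions. Your simulation claim that an execution of $P'$ is an execution of $P$ from a perturbed \emph{initial} state does not hold as stated, because the spurious references are injected continuously throughout the execution, not only at time zero; self-stabilization from arbitrary initial states does not by itself rule out that these ongoing perturbations prevent convergence (the cleanup actions of $P$ could themselves trigger further substituted deletions, which is precisely the ``carousel'' you worry about). Closing that circularity is not a matter of bookkeeping --- it is the missing core of the proof. The fix is to prove the impossibility of deletion outright, after which the transformation becomes a reinterpretation of the existing actions rather than a behavioural change.
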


The rest of this section is dedicated to proving Theorem~\ref{thm:realUnviversality}.
We say that a node $u$ \emph{deletes} its reference of another node $v$, if $u$ executes an action such that $(u,v)$ is explicit, not a multi-edge, and $u$ removes the reference from its local memory without sending it to another node. 
Before we are able to prove Theorem~\ref{thm:realUnviversality}, we need the following lemma.
\begin{lemma}
\label{lem:noDeletion}
Any compare-store-send protocol that self-stabilizes to a strongly-connected topology and contains an action such that a node $u$ deletes a reference does not preserve weak connectivity.
\end{lemma}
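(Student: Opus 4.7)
The plan is a proof by contradiction via configuration surgery. Assume such a protocol $P$ exists that nonetheless preserves weak connectivity; I will exhibit a weakly connected state from which a single enabled action of $P$ disconnects the graph.

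First I would pin down the offending action. By hypothesis, there is some computation of $P$ and some state $S$ in it in which a node $u$ executes an action whose effect on the edge $(u,v)$ is a deletion in the sense defined just before the lemma: $u$ removes its unique, explicit reference of $v$ from local memory without sending it anywhere. Let $L_u$ denote the local state of $u$ in $S$, i.e., the values of all its protocol variables together with the contents of $u.Ch$. Since $P$ is compare-store-send, the action's guard and its local effect are determined by $L_u$ alone: in any global state in which the local state of $u$ coincides with $L_u$, the very same action is enabled, and its execution removes the very same reference without sending it anywhere, because nothing in the semantics permits the action to branch on the identity or state of nodes outside $u$'s local memory.

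Next I would build a bad initial state $S'_0$ on the fixed node set $V$. Copy $L_u$ verbatim into $u$; by construction it contains an explicit reference to $v$ and possibly to further nodes $w_1,\dots,w_k$. Give $v$ empty protocol variables and empty channel. Give every remaining $w \in V \setminus \{u,v\}$ a single reference to $u$ and an empty channel. This state is weakly connected, because every node of $V$ is either pointed to by $u$ or points to $u$. Moreover, the only edge of $NG$ incident to $v$ is the explicit edge $(u,v)$: no node other than $u$ holds a reference of $v$, and all channels are empty, so no implicit edge to $v$ exists either. Since the self-stabilization model admits arbitrary initial states, $S'_0$ is a legitimate starting configuration for $P$.

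Finally, schedule $u$'s deletion action as the first executed action. It is enabled because the local state of $u$ in $S'_0$ equals $L_u$. Its execution removes the reference of $v$ from $u$ without producing any message carrying that reference, so in the successor state $S'_1$ neither an explicit nor an implicit edge is incident to $v$, while the remaining $|V|-1$ nodes still form a weakly connected subgraph through $u$. Hence $\{v\}$ becomes its own weakly connected component in $S'_1$, contradicting preservation of weak connectivity. The point that needs the most care is the inheritance of the guard across the surgery: one must argue that cloning $L_u$ into a freshly engineered global topology really does re-trigger the exact same deletion. This is precisely where the compare-store-send hypothesis is essential, since it forbids an action from depending on anything outside the local memory and channel of the executing node, such as the remote state of the nodes whose references the executing node happens to hold.
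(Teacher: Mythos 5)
Your proof is correct and follows essentially the same strategy as the paper's: isolate the local state of $u$ at the moment it deletes the reference of $v$, transplant that local state into a weakly connected configuration in which $(u,v)$ is the only edge incident to $v$, and use the compare-store-send property to argue that the same deletion is re-triggered, disconnecting $v$. The paper illustrates this with a concrete three-node example whereas you perform the surgery on the full node set, but the key idea (locality of the deletion decision) and the contradiction are identical; your observation that the ``not a multi-edge'' clause guarantees $u.Ch$ carries no reference of $v$ is a worthwhile extra precision.
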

\begin{proof}
Assume for contradiction that the protocol preserves weak connectivity and there is an action in which $u$ deletes a reference.
Consider the left graph depicted in Figure~\ref{fig:deletionCounterExample} and assume that $u$ aims at deleting the edge $(u,v)$. It is possible that $u$ deletes the reference of $v$ immediately.
However, since the protocol presumably preserves weak connectivity, $u$ may not delete the edge immediately, but is allowed to perform other actions.
However, since the graph is still connected without $(u,v)$, $u$ will eventually decide that it can delete $(u,v)$ safely and execute the action that ultimately deletes the reference of $v$.
Let $S'$ be the \emph{internal state} of $u$ before it deletes the edge, i.e., the values of all variables of $u$ and messages in $u.Ch$.
  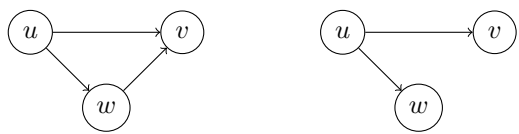
\begin{figure}[ht]
    \centering
 	\begin{tikzpicture}
		\node(u)[circle,draw=black] at (0,0) {$u$};
		\node(v)[circle,draw=black] at (1,-1) {$w$};
		\node(w)[circle,draw=black] at (2,0) {$v$};
		\draw [->] (u) -- (w);
		\draw [->] (u) -- (v);
		\draw [->] (v) -- (w);
	\end{tikzpicture}
	\qquad \qquad
	 \begin{tikzpicture}
		\node(u)[circle,draw=black] at (0,0) {$u$};
		\node(v)[circle,draw=black] at (1,-1) {$w$};
		\node(w)[circle,draw=black] at (2,0) {$v$};
		\draw [->] (u) -- (w);
		\draw [->] (u) -- (v);
	\end{tikzpicture}
	\caption{Graph instances for the proof of Lemma~\ref{lem:noDeletion}.}\label{fig:deletionCounterExample}
 \end{figure}
We construct a new system state by taking the right graph depicted in Figure~\ref{fig:deletionCounterExample} and setting the internal state of $u$ to $S'$.
Naturally, this is a valid initial state for a computation.
Since the internal state of $u$ has not changed it still makes the decision to delete its reference of $v$.
However, this disconnects $u$ and $v$, thereby weak connectivity is lost which is a contradiction.
\end{proof}

Now we can prove Theorem~\ref{thm:realUnviversality}.

\begin{proof}[Proof of Theorem~\ref{thm:realUnviversality}]
Let $A$ be a compare-store-send protocol that self-stabilizes to a strongly-connected topology and preserves weak connectivity and let $u$ be a node that acts according to $A$.
Since $A$ is a compare-store-send protocol we focus on the actions of $A$ that specifically handle references of nodes.

At first consider all actions of $A$ that are executed because a local predicate becomes true, i.e., $u$ does not receive any message.
Node $u$ has three options: (i) it does not interact with its references, or (ii) it sends one or multiple references to one or multiple nodes, or (iii) it deletes one or multiple references.
In case (i) there is no interaction between nodes.
In case (ii) $u$ sends references and either keeps them in its memory or does not keep them.
The first subcase can be transformed such that the Introduction primitive is used.
For the second subcase we can use the Delegation primitive.
Case (iii) is not allowed due to Lemma~\ref{lem:noDeletion}.

Next consider all actions of $A$ that are triggered by a message that contains at least one reference, i.e., the message is received by $u$ and has to be processed.
Node $u$ has multiple options for each reference in the received message: (i) keep the reference, and/or (ii) send the reference to one or multiple nodes, or (iii) delete the reference (i.e., neither save the reference nor send it to another node).
Case (i) is either the Fusion primitive (in case $u$ already has a reference of that node) or does not need to be handled by a primitive since keeping an edge does not change the network.
Case (ii) is again Introduction or Delegation, depending on whether $u$ also saves the reference.
Similar to the previous case, case (iii) is not allowed.
Moreover, $u$ has the aforementioned option of sending one or multiple references of nodes that are not in the message to one or multiple nodes.

Finally, consider all actions of $A$ that are triggered by a message that does not contain any references.
This case basically reduces to the case in which a predicate triggers action execution because node $u$ has the same options for interaction with its references.
\end{proof}

\section{Primitives for Monotonic Searchability}\label{sec:new_primitives}
Although the primitives of \cite{KoutsopoulosSS15} are general enough to construct any conceivable overlay, they do not inherently satisfy monotonic searchability. 
This is due to the fact that the Delegation primitive replaces an explicit edge $(u,v)$ by a path $(u,w,v)$ consisting of an explicit edge $(u,w)$ and an implicit edge $(w,v)$ and thus a search message from $u$ to $v$ issued after the delegation may be processed by $w$ before there is a path from $w$ to $v$ via explicit edges, causing the search message to fail (even though an earlier message sent while $(u,v)$ was still an explicit edge was delivered successfully).
Consequently, we are going to introduce a new set of primitives that enables monotonic searchability.
We say a set of primitives is \emph{search-universal} according to a set of Invariants \invar if the following holds:
\begin{enumerate}
 \item the set of primitives is weakly universal,
 \item starting from every state in which the invariants in \invar hold, for every pair of nodes $u$ and $v$ as soon as there is a path via explicit edges from $u$ to $v$, there will be a path via explicit edges from $u$ to $v$ in every subsequent step. 
\end{enumerate}
We are now going to introduce a modified set of primitives that are search-universal.
Moreover, we will show that these new primitives are also general enough to cover all self-stabilizing protocols that can be built by the original primitives.
Consequently, we ultimately aim at a result similar to Theorem~\ref{thm:realUnviversality} for the new primitives.

Remember that we assume the \mdl property.
Therefore, in every fixed state $S$ in every execution of a self-stabilizing protocol, each node $u$ can divide its explicit edges into two subsets: the \emph{stable edges} and the \emph{temporary edges} (not to be confused with implicit edges).
The first set contains those explicit edges that $u$ wants to keep, given its current neighborhood in $S$; the second set holds the explicit edges that are not needed from the perspective of $u$ in $S$.
Note that the set of temporary edges can also be the empty set.

For the new primitives, a node does not only store references of its neighbors, but additionally stores sequence numbers for every reference in its local memory, i.e., every node $u$ stores for each neighbor $v$ an entry $u.eseq[id(v)]$ (or $u.eseq[v]$, in short).
Consider the following primitives:
\begin{description}
 \item[Introduction] If a node $u$ has a reference of two nodes $v$ and $w$ with $v \neq w$, $u$ \emph{introduces} $w$ to $v$ if $u$ sends a message to $v$ containing a reference of $w$ while keeping the reference.
 \item[Safe-Delegation] Consider a node $u$ that has references of two different nodes $v$ and $w$. 
	In order to perform \emph{Safe-Delegation}, $u$ has to distinguish between $(u,w)$ being implicit or temporary.
 
If $(u,w)$ is an implicit edge, it is delegated as in the original delegation primitive (we will later refer to this case as an \emph{implicit delegation} or \impldelegate{} to avoid confusion with the original primitives).
If $(u,w)$ is a temporary edge, it can only be delegated to a node $v$ if $(u,v)$ is a stable edge.
 Whenever an explicit edge $(u,w)$ is to be delegated to another node $v$, $u$ sends a $\delegateREQ{u,w,eseq}$ message to $v$, where $eseq = u.eseq[w]$.
 %(note that the fact that $u$ wants to delegate this edge means that $(u,w)$ is a temporary edge).
 Additionally, it sets $u.eseq[v]$ to $max\{u.eseq[v], u.eseq[w] + 1\}$.
 Any node $v$ that receives a $\delegateREQ{u,w,eseq}$ message, adds $(v,w)$ to its set of explicit edges (if it does not already exist), sets $v.eseq[w]$ to $\max\{v.eseq[w],eseq + 1\}$ and sends a $\delegateACK{w,eseq}$ message back to $u$. 
 Upon receipt of this message, $u$ checks whether $eseq = u.eseq[w]$ and whether $(u,w)$ is actually a temporary edge (note that the last check is necessary to handle corrupt initial states).
 If both conditions hold, $u$ removes the temporary explicit edge to $w$ and sends an \impldelegate{w} message to one of its neighbors. 
 Otherwise, $u$ simply acts as it would upon receipt of an \impldelegate{w} message.
    \item[Fusion] If a node $u$ has two references $v$ and $w$ with $v=w$, then $u$ \emph{fuses} the two references if it only keeps one of these references.
	Note that when a node $u$ receives a \delegateREQ{v,w,eseq} message and already stores a reference of $w$, it also behaves as described in the Safe-Delegation primitive.
\end{description}

We define $\mathcal{ISF}$ as the set containing the three primitives Introduction, Safe-Delegation and Fusion.
Throughout the paper we assume that  \delegateREQ{} and \delegateACK{} messages are only sent in the Safe-Delegation primitive.
Analogous to \pidf, let \pisf denote the set of all distributed protocols where all interactions between processes can be decomposed into the primitives of $\mathcal{ISF}$.
Likewise to the \mdl property, we say that a protocol fulfills the \emph{stable} \mdl property, if the protocol fulfills the \mdl property with respect to stable explicit edges.
More specifically, for any action $a$ of the protocol it holds that a node $u$ executing action $a$ will always keep a reference of another node $v$ in its local memory (i.e., the stable edge $(u,v)$) if an edge $(u,v)$ is part of the final topology, and if a node $u$ executing action $a$ in some state $S$ decides to not keep a reference of another node $v$ in its local memory (i.e., the temporary or implicit edge $(u,v)$), every other action of the protocol executed by $u$ in any state $S' > S$ will decide to not keep the reference of $v$.

\subsection{Universality of the new primitives}
To show that our primitives are search-universal we first show that they are weakly universal.
\begin{lemma}\label{lem:pms_is_weakly_universal}
 $\mathcal{ISF}$ is weakly universal.
\end{lemma}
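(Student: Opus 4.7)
The plan is to reduce the statement to the weak universality of $\mathcal{IDF}$ established in \cite{KoutsopoulosSS15}. Given any weakly connected initial graph $G$ and any strongly connected target graph $G'$ on the same vertex set, that result yields a finite sequence $\sigma_1,\dots,\sigma_n$ of Introduction, Delegation and Fusion operations transforming $G$ into $G'$. I will simulate this sequence by a (slightly longer) sequence of $\mathcal{ISF}$ operations that produces the same final explicit edge set.

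Introduction and Fusion appear verbatim in both primitive sets, so any such $\sigma_i$ carries over unchanged. A Delegation of an \emph{implicit} edge $(u,w)$ to $v$ is, by definition, exactly the implicit branch of Safe-Delegation (\impldelegate{w}), so it likewise translates one-to-one. The only nontrivial case is a Delegation of an \emph{explicit} edge $(u,w)$ to $v$.

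For this case I would have $u$ invoke Safe-Delegation, classifying $(u,w)$ as a temporary edge and $(u,v)$ as a stable edge. This is a legal invocation: the stable/temporary partition is an on-the-fly decision by $u$ (the stable-\mdl property is a constraint on complete protocols, not on individual primitive invocations), and the precondition that $u$ holds references of both $v$ and $w$ is inherited from the $\mathcal{IDF}$-Delegation being simulated (if the reference of $v$ still sits in a message in $u.Ch$, I first deliver it to turn $(u,v)$ into an explicit edge). Scheduling the \delegateREQ and \delegateACK messages to be delivered immediately and without interleaving guarantees that the $eseq$ check succeeds; afterwards $v$ holds an explicit edge to $w$ and $u$ has removed its explicit edge to $w$, exactly as in the simulated $\mathcal{IDF}$-Delegation. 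The only residual side effect is the \impldelegate{w} message that $u$ must subsequently send to one of its neighbors. Since $u$ still stores the reference of $v$ (Safe-Delegation did not touch it), I route this message to $v$; the implicit edge $(v,w)$ it creates is then absorbed by one application of Fusion with the explicit edge $(v,w)$ already present at $v$. The net effect on $ENG$ coincides with that of the simulated Delegation.

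Concatenating the per-step simulations yields an $\mathcal{ISF}$ computation from $G$ to $G'$, establishing weak universality. The main obstacle — and essentially the only point where the translation is not verbatim — is the extra \impldelegate{w} message that the temporary branch of Safe-Delegation forces $u$ to emit; routing it to the delegation target $v$ and canceling it by a Fusion is the key trick that keeps $ENG$ free of spurious implicit edges and makes the simulation effect-preserving. All remaining bookkeeping (sequence-number matching, message ordering) is handled by our control of the schedule in the construction.
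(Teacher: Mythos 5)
Your proof takes a genuinely different route from the paper's. The paper does not reduce to the $\mathcal{IDF}$ result as a black box; it gives a direct construction: first every node repeatedly uses Introduction to turn the weakly connected graph into a clique in $O(\log n)$ rounds, and then, for every clique edge $(u,w)\notin E'$, $u$ Safe-Delegates $w$'s reference to the next node on the shortest path $P_{E'}(u,w)$, with termination shown via the potential $X=\sum_w \max_{u:(u,w)\in NG\setminus E'}|P(u,w)|$. Your simulation argument is more modular (it would transfer automatically to any strengthening of the $\mathcal{IDF}$ universality result), whereas the paper's construction is self-contained and, as a bonus, exhibits an explicit convergent strategy rather than relying on the non-constructive ``in principle'' sequence.

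There is, however, one imprecise step you should repair. You claim that after simulating an explicit $\mathcal{IDF}$-Delegation of $(u,w)$ to $v$, the state matches the $\mathcal{IDF}$ state ``exactly'': it does not. The original Delegation leaves $(v,w)$ as an \emph{implicit} edge (a message in $v.Ch$), whereas your Safe-Delegation round trip plus Fusion leaves $(v,w)$ \emph{explicit} at $v$. This matters because later operations $\sigma_j$ in the $\mathcal{IDF}$ sequence may act on that implicit edge (e.g., $v$ receives the message and delegates $w$'s reference onward to $x$ rather than keeping it), and in your simulated world $v$ now holds an explicit edge that must be moved by the temporary branch of Safe-Delegation instead of the implicit one. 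The fix is to replace exact state equality by a simulation invariant --- every $\mathcal{IDF}$-implicit edge $(v,w)$ corresponds to an $\mathcal{ISF}$-explicit edge $(v,w)$, and all other state components agree --- and to check that each of the three things $\mathcal{IDF}$ can do with an implicit edge (keep, fuse, delegate onward) is simulable under this correspondence; the onward-delegation case again needs the target of the delegation to be an explicit neighbor of $v$, which holds because $\mathcal{IDF}$ could only send the message over an explicit edge in the first place. With that invariant in place your argument goes through and yields the same final $NG=G'$; without it, the per-step ``effect-preserving'' claim as written is false.
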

\begin{proof}
The idea of this proof similar to that of the proof of (weak) universality for the original primitives~\cite{KoutsopoulosSS15}.
However, it is a bit more involved because of the Safe-Delegation primitive.
In general, we will present a simple general strategy of how to transform an arbitrary weakly connected graph $G=(V,E)$ into any other strongly connected graph $G'=(V,E')$. 
At first, we use the Introduction primitive to transform $G$ into a clique, i.e. every node continuously introduces all neighbors (including itself) to each other. 
Obviously, $O(\log n)$ communication rounds are sufficient to build a clique as the distances between the nodes halve in each round of introduction.

Next we show that by using Safe-Delegation and Fusion, one can transform the clique to $G'=(V,E')$.  
In general, each node $u$ always keeps all references of all other nodes $w$ with $(u,w) \in E'$.
Furthermore, whenever a node $u$ has two references of the same node $w$, it fuses them.
Consider the following approach:
For every edge $(u,w)$ that is in the clique but not in $E'$, each node $u$ uses the Safe-Delegation primitive to delegate the reference of $w$ to the neighbor $v$ that is the next node on the shortest path from $u$ to $w$ in $E'$ (in the following, we use the notation $P_{E'}(u,w)$ for this path).
That is, $u$ sends a \delegateREQ{u,w,eseq} message to $v$ and when it receives a \delegateACK{w,eseq} message, it no longer stores the reference of $w$, but delegates it to $v$ (which is still a neighbor of $u$ by what we said before) via an \impldelegate{w,eseq} message.
We show that by this procedure, all edges $u,v \notin E'$ will eventually vanish.

We define the potential $X := \sum_{u \in V} X(u)$, with $X(w) := \max_{u \in V: (u,w) \in NG \setminus E'}|P(u,w)|$ for a node $w$.
Note that no node ever delegates an edge $(x,z)$ to a node $y$ such that $|P(y,z)| > |P(x,z)|$.
Thus, $X$ never increases.
We show that $X$ is monotonically decreasing.
For an arbitrary node $w$, consider a node $u \in argmax_{u' \in V: (u',w) \in E}(|P(u',w)|)$ and let $d := |P(u,w)|$.
According to the procedure we describe, $u$ sends a \delegateREQ{u,w,eseq} message to $v$, the next node on $P_{E'}(u,w)$.
By the fair message receipt assumption, $v$ will receive this message, act according to the Safe-Delegation primitive and send back a \delegateACK{w,eseq} message to $u$ which will ultimately be received by $u$.
Note that $u.eseq[w] = eseq$ in this state because the only occasion at which $u$ would have increased $u.eseq[w]$ is that it received a \delegateREQ{x,w,eseq'} message for some $x$ and some number $eseq'$, but this would imply $(x,w) \in NG$ at the time of sending of that message and $|P(x,w)| > P|(u,w)|$ yielding a contradiction to the definition of $u$.
Thus, $u$ will no longer store the reference of $w$ and forward it via an \impldelegate{w} message to $v$.
As soon as this has happened for every $u'$ with $|P(u',w)|=d$, $X$ decreases.
By induction we have that $X$ will eventually be $0$, which finishes our proof.
\end{proof}

In order to enable monotonic searchability, we define the following two \textbf{message invariants}:
\begin{enumerate}
 \item If there is a \delegateREQ{u,w,eseq} message in $v.Ch$, then there exists a path $P = (u=x_1, x_2, \dots, x_k = v)$ that does not contain $(u,w)$ and for every $1 \leq i < k$, $x_i.eseq[x_{i+1}] > u.eseq[w]$, or $u.eseq[w] > eseq$.\label{Inv:DelegateReq}
 \item If there is a \delegateACK{w,eseq} message in $u.Ch$, then there exists a path $P = (u=x_1, x_2, \dots, x_k = w)$ that does not contain $(u,w)$ and for every $1 \leq i < k$, $x_i.eseq[x_{i+1}] > u.eseq[w]$, or $u.eseq[w] > eseq$.\label{Inv:DelegateAck}
\end{enumerate}
Intuitively, Invariant~1 states that whenever a node $v$ has a \delegateREQ{u,w,eseq} message in $v.Ch$ (i.e., node $u$ asked $v$ to establish the edge $(v,w)$ such that it may remove its own $(u,w)$ edge), then there is a path from $u$ to $v$ that does not use the edge $(u,w)$.
Invariant~2 states that whenever a node $u$ has a \delegateACK{w,seq} message in $u.Ch$ (i.e., some other node $v$ which $u$ asked to establish the edge $(v,w)$ has already done so), then there is a path from $u$ to $w$ that does not use the edge $(u,w)$.
However, both statements only need to hold if the value of \textsc{eseq} indicates that the messages belong to a current safe-delegation, i.e., if $u.seq[w] > eseq$, the \delegateREQ{} or \delegateACK{} message can be ignored.

We define the predicate $E(u,v)$ to be true if and only if there exists a directed  path from $u$ to $v$ via explicit edges.
In order to show search-universality, we prove the following lemma.
\begin{lemma}\label{lem:once_A_and_B_hold_then_in_every_subsequent_step_as_well}
 Consider a computation of a protocol $P \in \pisf$ that fulfills the stable \mdl property. If there is a state $S$ such that Invariant~\ref{Inv:DelegateReq} and Invariant~\ref{Inv:DelegateAck} hold, then they will hold in every subsequent state.
 Additionally, for every state $S' \geq S$ it holds that if $E(u,v) \equiv TRUE$ in $S'$, then $E(u,v) \equiv TRUE$ in every state $S'' \geq S'$.
\end{lemma}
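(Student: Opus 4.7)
The plan is to prove both statements jointly by induction on the number of atomic actions executed after $S$. Two general observations simplify the case analysis: every variable $x.eseq[y]$ is non-decreasing over the computation; and whenever $x.eseq[y]$ strictly exceeds the value $eseq$ carried by a \textsc{DelegateREQ} or \textsc{DelegateACK} message referring to $x$ and $y$, the second disjunct of the corresponding invariant (``$u.eseq[w] > eseq$'') becomes true and stays true forever thereafter. The atomic actions that a protocol in \pisf can perform decompose into Introduction, Fusion, implicit-delegation send, implicit-delegation receive, and the three sub-actions of Safe-Delegation proper: sending a \textsc{DelegateREQ}, processing a \textsc{DelegateREQ}, and processing a \textsc{DelegateACK}.

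All of these actions, with the single exception of processing a \textsc{DelegateACK} whose removal-conditions evaluate to true, leave the set of explicit edges either unchanged or enlarged, so monotonicity of explicit connectivity is immediate for them. For the invariants, Introduction, Fusion, and the two implicit-delegation actions touch neither eseq values nor \textsc{DelegateREQ}/\textsc{DelegateACK} channels and hence do nothing. When $u$ sends a \textsc{DelegateREQ}$(u,w,eseq)$ to a neighbor $v$, the simultaneous update $u.eseq[v] \leftarrow \max\{u.eseq[v], u.eseq[w]+1\}$ makes the direct edge $(u,v)$ a length-one witness for Invariant~\ref{Inv:DelegateReq} on the new message. When $v$ processes a \textsc{DelegateREQ}$(u,w,eseq)$, the outgoing \textsc{DelegateACK} is witnessed by extending Invariant~\ref{Inv:DelegateReq}'s path for the consumed message by the now-explicit edge $(v,w)$, whose raised value $v.eseq[w]\ge eseq+1 > u.eseq[w]$ satisfies the eseq condition. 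Invariants for pre-existing messages in both cases survive because only a single eseq value has been raised, and any invariant threatened by the raise has its second disjunct activated by the general observation above.

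The main obstacle is the completion of Safe-Delegation: when $u_0$ processes \textsc{DelegateACK}$(w_0, eseq_0)$ and the removal-conditions hold, $u_0$ deletes the temporary explicit edge $(u_0, w_0)$, which is the only atomic action that ever deletes an explicit edge. Because the removal-conditions force $u_0.eseq[w_0] = eseq_0$, the second disjunct of Invariant~\ref{Inv:DelegateAck} for the consumed message fails, so by the inductive hypothesis there is a path $Q$ from $u_0$ to $w_0$ that avoids $(u_0, w_0)$ and whose every edge has eseq strictly greater than $u_0.eseq[w_0]$. Monotonicity of explicit connectivity then follows by rerouting through $Q$ any explicit path that used $(u_0, w_0)$. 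For preservation of Invariants~\ref{Inv:DelegateReq} and~\ref{Inv:DelegateAck} themselves, I take any other REQ or ACK message whose witnessing path $P$ used $(u_0, w_0)$; its eseq condition forces the target eseq of that other message to be strictly less than $u_0.eseq[w_0]$, and hence strictly less than every edge eseq along $Q$, so splicing $Q$ in place of $(u_0, w_0)$ in $P$ preserves the eseq condition. Moreover, the forbidden edge $(u',w')$ of the other message cannot lie on $Q$: if it did, the eseq condition on $Q$ would give $u'.eseq[w'] > u_0.eseq[w_0]$, contradicting what we just derived. The spliced path is therefore a valid new witness, closing the induction.
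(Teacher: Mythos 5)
Your proof is correct and follows essentially the same route as the paper's: induction over atomic steps, with new \delegateREQ{} messages witnessed by the freshly raised direct edge, new \delegateACK{} messages witnessed by extending the REQ's path with the new edge $(v,w)$, and the critical edge-removal case handled by splicing in the path guaranteed by Invariant~\ref{Inv:DelegateAck} and using the strict $eseq$ ordering to rule out the forbidden edge reappearing. Your explicit treatment of the threat posed by raising an $eseq$ value on existing messages (via the second disjunct) is a detail the paper's proof passes over silently, but it does not change the overall argument.
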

\begin{proof}
  We show that if the statements hold in state $S$, then they hold in the subsequent state $S^*$, from which the claim follows by induction.
  
  Since in state $S$ Invariant~\ref{Inv:DelegateReq} holds we can conclude that:
  \begin{itemize}
   \item When a new \delegateREQ{u,w,eseq} message is sent out from node $u$ to node $v$, there has to be an explicit edge $(u,v)$, and $u.eseq[v]$ is set to at least $u.eseq[w] + 1$.
   \item The invariant can become false if an edge $(x,y)$ on the only remaining path $P$ from $u$ to $v$ that does not contain $(u,w)$ is removed or becomes implicit.
   According to the primitives stated above, this can only happen if $x$ receives a $\delegateACK{y,eseq'}$ message and $eseq' = x.eseq[y]$.
   For this message, the second invariant has to hold, yielding the existence of a path $Q = (x = x_1, x_2, \dots, x_k = y)$ even after the removal of $(x,y)$ such that for every $1 \leq i < k :$ $x_i.eseq[x_{i+1}] > x.eseq[y]$.
   Notice that since $P$ was the only remaining path from $u$ to $v$ not via $(u,w)$, the first invariant implies that $x.eseq[y] > u.eseq[w]$.
   Thus, there is a path $R = (u = y_1, y_2, \dots, w = y_l)$, which is $P$ with $(x,y)$ being replaced by $Q$, such that for all $1 \leq i < l : $ $x_i.eseq[x_{i+1}] > u.eseq[w]$ holds.
   In particular, this implies that $(u,w)$ is not contained in $R$.
   Thus, the first invariant still holds.
  \end{itemize}
  Considering Invariant~\ref{Inv:DelegateAck} we can conclude that:
  \begin{itemize}
   \item A new \delegateACK{w,eseq} message is sent to a node $u$ only as a response to a \delegateREQ{u,w,eseq} message received by a node $v$.
   According to Invariant~\ref{Inv:DelegateReq}, there exists a path $P = (u=x_1, x_2, \dots, x_k = v)$ that does not contain $(u,w)$ and for every $1 \leq i < k : $ $x_i.eseq[x_{i+1}] > u.eseq[w]$ (or $u.eseq[w] > eseq$, in which case we are done).      
   Since $v$ adds a new explicit edge $(v,w)$ when it sends out the \delegateACK{w,eseq} message and also ensures that $v.eseq[w] > eseq$ (by adjusting $v.eseq[w]$), the invariant still holds afterwards.
   \item The invariant can become false if an edge $(x,y)$ on the only remaining path $P$ from $u$ to $v$ that does not contain $(u,w)$ is removed or becomes implicit.
    This case is analogous to the second case of Invariant~\ref{Inv:DelegateReq}.
  \end{itemize}
Consequently both invariants hold in $S^*$.
  
  Since an explicit edge only turns implicit (i.e., a node removes the explicit edge from $NG$ and creates a new implicit edge to maintain connectivity) if a \delegateACK{} message is received, Invariant~\ref{Inv:DelegateAck} ensures that if $E(u,v) \equiv TRUE$ in state $S' \geq S$, $E(u,v) \equiv TRUE$ in every state $S'' \geq S'$. 
\end{proof}

Lemma~\ref{lem:pms_is_weakly_universal} and Lemma~\ref{lem:once_A_and_B_hold_then_in_every_subsequent_step_as_well} imply the following corollary:
\begin{corollary}
 $\mathcal{ISF}$ is search-universal according to Invariant~1 and Invariant~2.
\end{corollary}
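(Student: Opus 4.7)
The plan is to verify the two clauses in the definition of search-universality separately and observe that each is already established by one of the two preceding lemmas. Concretely, search-universality (with respect to $\mathcal{I}=\{$Invariant~1, Invariant~2$\}$) requires (a) weak universality of the primitive set, and (b) that starting from any state in which both invariants hold, explicit-edge reachability $E(u,v)$ is monotonically preserved over time.

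Clause (a) is precisely the statement of Lemma~\ref{lem:pms_is_weakly_universal}, so nothing further needs to be shown for it. For clause (b), fix any state $S$ in which Invariants~1 and~2 both hold. Lemma~\ref{lem:once_A_and_B_hold_then_in_every_subsequent_step_as_well} provides exactly the two facts we need: its first part tells us that the invariants persist throughout the computation suffix starting at $S$, and its second part asserts that for any $S' \geq S$, once $E(u,v)$ becomes true at $S'$ it remains true in every state $S'' \geq S'$. Note that this is the content of clause (b) of search-universality verbatim.

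Thus the plan is simply to appeal to Lemma~\ref{lem:pms_is_weakly_universal} for clause (a) and to Lemma~\ref{lem:once_A_and_B_hold_then_in_every_subsequent_step_as_well} for clause (b), and conclude the corollary. I do not anticipate a real obstacle here, because both lemmas have already done the substantive work: Lemma~\ref{lem:pms_is_weakly_universal} constructs a concrete clique-then-delegate strategy witnessing reachability of any strongly connected target graph, while Lemma~\ref{lem:once_A_and_B_hold_then_in_every_subsequent_step_as_well} handles the delicate interplay between \delegateREQ{} and \delegateACK{} messages via the sequence-number arguments. The only point worth double-checking while writing the proof is that the phrase ``starting from every state in which the invariants in \invar hold'' in the definition of search-universality matches the hypothesis of Lemma~\ref{lem:once_A_and_B_hold_then_in_every_subsequent_step_as_well}; this is an immediate syntactic matching, so the corollary follows with a one-line argument.
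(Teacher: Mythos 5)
Your proposal is correct and matches the paper exactly: the paper derives this corollary directly as the conjunction of Lemma~\ref{lem:pms_is_weakly_universal} (giving weak universality) and Lemma~\ref{lem:once_A_and_B_hold_then_in_every_subsequent_step_as_well} (giving persistence of the invariants and of explicit-edge reachability), with no additional argument. The only minor point, which the paper also glosses over, is that Lemma~\ref{lem:once_A_and_B_hold_then_in_every_subsequent_step_as_well} is stated for protocols in \pisf fulfilling the stable \mdl property, so the search-universality claim is implicitly understood in that setting.
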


We conclude this section by showing that a protocol $A \in \pidf$ that fulfills the \mdl property and self-stabilizes to some topology can be transformed into a protocol $B \in \pisf$ that fulfills the stable \mdl property and for which it holds that in every computation of $B$ there is a state in which Invariant~1-2 hold.
   
\begin{theorem}\label{thm:conventional_protocols_transformable_such_that_invarA_and_B_hold}
    Consider a protocol $A \in \pidf$ that self-stabilizes to a strongly-connected topology $T$ and that fulfills the \mdl property.
    Then $A$ can be transformed into another protocol $B \in \pisf$ such that $B$ fulfills the stable \mdl property, $B$ self-stabilizes to the same topology, and in every computation of $B$ there exists a computation suffix in which Invariants~\ref{Inv:DelegateReq} and~\ref{Inv:DelegateAck} hold.
\end{theorem}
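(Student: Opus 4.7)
My plan is to exhibit an explicit syntactic transformation of $A$ into $B$ and then verify each of the four required properties (membership in \pisf, the stable \mdl property, self-stabilization to $T$, and the invariant-suffix property) in turn, leveraging the \mdl property of $A$ and Lemma~\ref{lem:once_A_and_B_hold_then_in_every_subsequent_step_as_well}. Concretely, I would walk through each action of $A$ and leave Introduction and Fusion steps untouched, but replace each Delegation step. Whenever an action of $A$ would delegate a reference of $w$ from $u$ to $v$, I distinguish two cases depending on whether the reference $(u,w)$ is explicit in $u$'s local memory at the moment the action fires, or implicit (i.e., the reference currently resides in a message being processed). In the explicit case, $u$ initiates the three-phase Safe-Delegation handshake (send \delegateREQ{u,w,eseq}, wait for \delegateACK{w,eseq}, then issue \impldelegate{w}) while keeping $(u,w)$ as a temporary edge until the handshake closes; in the implicit case, $u$ forwards immediately via \impldelegate{w}, which behaves identically to the original Delegation. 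This yields $B \in \pisf$ by construction.

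Next I would verify the stable \mdl property. Since $B$ differs from $A$ only in how deletions are carried out, every node takes the same keep/discard decision on every reference as under $A$: edges of the final topology $T$ remain stable, while edges that $A$ would have discarded are classified as temporary (until the handshake closes) or implicit (while in transit). Monotonicity of these decisions is inherited directly from $A$'s \mdl property, and I would spell this out at the granularity of individual actions. For self-stabilization of $B$ to $T$, I would argue that fair message receipt guarantees that every Safe-Delegation handshake completes after finitely many steps, at which point $B$'s local state coincides with what $A$'s state would have been after the corresponding immediate Delegation. Hence any execution of $B$ can be viewed as an execution of $A$ in which certain Delegation steps are ``stretched'' across several atomic steps, and the convergence proof of $A$ carries over. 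Corrupt initial \delegateREQ messages can cause a node $v$ to install an unrequested edge $(v,w)$, but there are only finitely many such spurious edges, and $A$'s self-stabilization tolerates and eventually removes them through further Safe-Delegations.

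The final and, in my view, main obstacle is exhibiting a computation suffix on which Invariants~\ref{Inv:DelegateReq} and~\ref{Inv:DelegateAck} hold, since Lemma~\ref{lem:once_A_and_B_hold_then_in_every_subsequent_step_as_well} requires a state in which both are satisfied \emph{simultaneously}. My plan is to show that eventually no \delegateREQ{} or \delegateACK{} message exists in any channel, which makes both invariants hold vacuously. Two ingredients combine for this. First, after $B$ has reached a legitimate state, no node has a temporary explicit edge (by the stable \mdl property, combined with the legitimate-state characterization of implicit edges inherited from $A$'s \mdl hypothesis), so the Safe-Delegation primitive is no longer invoked and no new \delegateREQ{} messages are produced. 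Second, processing a \delegateREQ{} message produces exactly one \delegateACK{}, while processing a \delegateACK{} produces only an \impldelegate{w} (never a new \delegateREQ{} or \delegateACK{}). Hence, once no fresh \delegateREQ{}s are injected, the total number of \delegateREQ{} and \delegateACK{} messages in all channels strictly decreases on every other processing step, and by fair message receipt it reaches zero after finitely many steps. The resulting state satisfies both invariants vacuously, and Lemma~\ref{lem:once_A_and_B_hold_then_in_every_subsequent_step_as_well} then extends the property to every subsequent state, completing the argument.
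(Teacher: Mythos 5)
Your overall architecture matches the paper's: replace Delegation by the Safe-Delegation handshake, inherit the stable \mdl property from $A$, prove convergence by viewing computations of $B$ as simulations of computations of $A$ with ``stretched'' delegations, and obtain the invariant suffix vacuously by showing that \delegateREQ{} and \delegateACK{} traffic eventually ceases. However, the last step --- the one you correctly identify as the main obstacle --- has a genuine gap. Your first ingredient (``after $B$ has reached a legitimate state, no node has a temporary explicit edge'') assumes precisely what has to be proven: that the temporary edges ever disappear. Your second ingredient, the claim that once no fresh \delegateREQ{} messages are injected the total count of \delegateREQ{} and \delegateACK{} messages strictly decreases, is false as stated, for two reasons. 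First, completing one handshake for a temporary edge $(u,w)$ does not destroy the reference of $w$; it converts it into an \impldelegate{w} that travels to the next node $u_2$, which may itself decide not to keep $w$, thereby creating a \emph{new} temporary edge $(u_2,w)$ and spawning a fresh round of \delegateREQ{} messages. Second, the \timeout action of $B$ re-sends a \delegateREQ{} for \emph{every} currently temporary edge on every execution, so as long as any temporary edge survives, new \delegateREQ{} messages keep being produced and no simple message count is monotone.

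The paper closes exactly this hole by invoking property (iv) of the \mdl property: in a legitimate state of $A$, every delegated reference of $w$ travels along a \emph{fixed cycle-free path} $P_{(u,w)} = (u, u_2, \dots, u_k)$ of stable edges until it fuses with the existing edge $(u_k,w)$. This permits the potential $\Phi := \sum_{w} \max_{u:\,(u,w)\ \mathrm{temporary}} (|P_{(u,w)}|-1)$, which cannot increase (new temporary edges only arise downstream on these paths) and provably decreases in finite time while nonzero, so all temporary edges --- and with them all \delegateREQ{}/\delegateACK{} messages --- eventually vanish. You should restructure your final paragraph around such a potential argument over the delegation paths rather than around a message count; without it the suffix in which Invariants~\ref{Inv:DelegateReq} and~\ref{Inv:DelegateAck} hold is not established. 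A secondary, smaller point: your simulation argument for convergence is stated only informally (``the convergence proof of $A$ carries over''); the paper makes this precise via an explicit equivalence relation between states of $A$ and $B$ and an induction over actions, which is needed to rule out that the asynchronous interleaving of handshakes produces stable-edge configurations that $A$ itself could never reach.
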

\begin{proof}
Consider the following transformation of $A$:
    \begin{enumerate}
     \item Every action $a$ in $A$ is transformed into an action $b$ such that every use of the Delegation primitive is replaced by the Safe-Delegation  primitive.
     That is: whenever in an action $a$ the current node $u$ removes a reference of another node $w$ from its memory and sends its to another node $v$ in a message $m$, action $b$ makes the edge $(u,w)$  temporary  and additionally sends out a \delegateREQ{u,v,u.eseq[v]} message to $v$.
For the whole transformation we will assume that $b$ always \emph{marks} its new messages and temporary edges for identification proposes, i.e., every temporary edge $e$ and every message $m$ of one of the newly created message types in $B$ carries an indication that the edge/message belongs to the original action $a$ and additionally carries the original message $m'$ (referenced by $e.origAction$/$m.origAction$ and $e.origMessage$/$e.origMessage$, respectively).
     Furthermore, any computation that $a$ performs on explicit edges, is performed on stable edges in $b$.
  \item  A node that receives a \delegateREQ{} message $m$ acts according to the Safe-Delegation primitive (where $m.origAction$ and $m.origMessage$ are also attached to the \delegateACK{} message sent) and executes the action that $m$ would trigger.
     \item After receiving a \delegateACK{w} message $m$, a node $u$ sends an \impldelegate{w} message to the stable neighbor $v$ that the edge $(u,w)$ would have been delegated to in $m.origAction$ (when its neighborhood is the stable neighborhood of $u$).
     \item If a node $u$ receives an \impldelegate{w} message, and $u$ already stores an edge $(u,w)$, $u$ does nothing. 
     Else if $u$ receives an \impldelegate{w} message $m$, and the action that $m.origMessage$ triggers in $A$ would cause $u$ to create the explicit edge $(u,w)$, then $u$ stores the reference of $w$.
     In any other case, $u$ executes the action that would be triggered by $m.origMessage$ in $A$ with the difference that it does not perform any changes to the state of $u$ (aside from removing the message from $u.Ch$), i.e., it only sends out messages.   
     \item In the \timeout action, a node $u$ sends a \delegateREQ{u,w} message  for each temporary edge $e=(u,w)$ to the stable neighbor $v$ to which that edge would be delegated to in $e.origAction$.
    \end{enumerate}
With a slight abuse of notation we say that a message $m$ sent by a node $u$ in protocol $A$ \emph{represents a Delegation} if $m$ is a message sent by $u$ in the execution of the Delegation primitive, i.e., in case $u$ delegates $w$ to $v$ (see Figure~\ref{fig:prim-dele}) $m$ is the message in $v.Ch$ holding the reference of $w$.

First, notice that $B \in \pisf$ follows directly from the construction of $B$.
Second, the fact that $B$ fulfills the stable \mdl property follows directly from the fact that $A$ fulfills the \mdl property.
Third, we prove the convergence property of $B$:

Consider an arbitrary but fixed computation $C_B$ of $B$ starting from a weakly connected graph.
Let $S$ be the first state of $C_B$ such that for all pairs of nodes $u, v$ one of the following two statements holds: (i) there was at least one state before $S$ in which an (implicit or explicit) edge $(u,v)$ existed or (ii) in no state of the computation suffix starting in $S$ an edge $(u,v)$ will exist.
Informally speaking $S$ is the state in which every node has received every reference it will ever receive in $C_B$ at least once.
Since the set of nodes is finite, such a state has to exist.
The stable \mdl property implies that the graph $ENG_s$ formed by the stable edges will remain the same in every state $S' > S$.
Thus, we now show that the graph $ENG_s(S)$ formed by the stable edges in $S$ is exactly $T$.

Consider that starting from $S$, every \delegateACK{} message is delivered without delay and the only actions that are executed are those that are triggered by \delegateACK{} messages.
Note that the only occasion at which \delegateACK{} messages are created is when a node receives a \delegateREQ{} message.
Consequently, there is a state $S' > S$ that is reachable from $S$ in which there is no \delegateACK{} message.
Analogously, it easy to show that there is a state $S'' > S'$ that is reachable from $S'$ in which there is neither a \delegateACK{} nor a \impldelegate{} message.

We transform $S''$ in the following way into a state $S_A''$:
Every temporary edge $e = (u,w)$  is replaced by a message $m.origMessage$ in the channel of the node $v$ that $u$ would send $m.origMessage$ to in $m.origAction$.
Similarly, every \delegateREQ{u,w,eseq} message in $v.Ch$ in $S''$ is replaced by $m.origMessage$ in $v.Ch$.
Loosely speaking, we create $S_A''$ by retransforming the temporary edges and \delegateREQ{} messages back to their original counterparts in $A$.
Note that $ENG(S_A'')=ENG_s(S'')$ and $NG(S_A'') = NG(S'')$.
Since there are no \delegateACK{}, \delegateREQ{} and \impldelegate{} messages in $S_A''$, $S_A''$ is a valid initial state for $A$.
Now consider a computation $C_A$ of $A$ that starts in $S_A''$.
We say a state $S_A$ of $A$ and a state $S_B$ of $B$ are \emph{equivalent} or short $S_A \equiv S_B$, if and only if (i) the local state of the nodes in $S_A$ (except for the edges) is equal to the local state of the nodes in $S_B$ (except for the edges) and (ii) the set of explicit edges in $S_A$ is equal to the set of stable edges in $S_B$ and (iii) for every message $m$ of $A$ that was sent in an action $\bar{a}$ and represents a Delegation (i.e., it is currently in $v.Ch$ for some node $v$, was sent by a node $u$, and carries a reference of a node $w$), there is a temporary edge $e = (u,w)$ with $e.origMessage = m$ and $e.origAction = \bar{a}$ and at least one \delegateREQ{u,w} message $m'$ with $m'.origMessage = m$ and $m'.origAction = \bar{a}$ in $v.Ch$ and (iv) no \delegateACK{} and \impldelegate{} messages exist in $S_B$.
We show that for every state $S_A^* \ge S_A''$ in $C_A$ there is a state $S^*$ reachable from $S''$ by $B$ such that $S^* \equiv  S_A^*$.

The induction base is obvious due to the construction of $S_A''$ from $S''$ .
For the induction step, consider an arbitrary but fixed state $S_A > S_A''$ in $C_A$ and let $\bar{a}$ be the action of $A$ that is executed by some node $u$ resulting in $N_A$, i.e., the subsequent state of $C_A$ (since we assume atomic action execution, this action is well-defined).
Then by construction of $B$ and the induction hypothesis, let $S_B$ be the state reachable from $S''$ with $S_B \equiv S_A$.
First consider the case that $\bar{a}$ is a \timeout action of $u$.
Consequently, we can execute the corresponding \timeout of $B$ for $u$. 
Both actions behave similar with the exception that (i) the \timeout of $B$ sends a \delegateREQ{v,w} message $m$ with $m.origMessage = m'$ and $m.origAction = \bar{a}$ whenever $\bar{a}$ sends a Delegation $m$ and (ii) the \timeout of $B$ sends additional \delegateREQ{u,w} messages $m$ for all temporary edges $e$ with $m.origAction = e.origAction$ and $m.origMessage = e.origMessage$.
Let $N_B$ be the subsequent state of $S_B$ after executing the \timeout of $B$.
The first difference makes sure that for each message $m$ sent in $\bar{a}$ that represents a Delegate there is a corresponding temporary edge $e$ with $e.origMessage = m$ and $e.origAction = \bar{a}$ and a \delegateREQ{} message $m'$ in $N_B$ with $m'.origMessage = m$ and $m'.origAction = \bar{a}$.
The fact that $B$ sends out additional \delegateREQ{u,w} messages for all temporary edges can be ignored since by definition of $S_B$ there already exists at least one \delegateREQ{u,w} for each temporary edge $(u,w)$ and they carry the same properties.
Thus, $N_A \equiv N_B$ holds.
Next consider the case that $\bar{a}$ is a not a \timeout action but an action triggered by a message $\bar{m}$ in $u.Ch$.
In case $\bar{m}$ does not represent a Delegate, then similar arguments as for the timeout action are applicable (since the action of $B$ sends \delegateREQ{v,w} whenever $\bar{a}$ uses the delegation primitive).
Therefore, consider the case that $\bar{m}$ represents a Delegate. 
Starting from $S_B$ consider the following action execution of $B$: 1) $u$ receives a \delegateREQ{} message $m'$ and answers with \delegateACK{} message $m''$ with $m''.origMessage = m'.origMessage$ and $m''.origAction = m'.origAction$ to the node $w$ that sent out the \delegateREQ{} (and also executes the action that would have been triggered by $\bar{m}$). 2) Node $w$ receives the \delegateACK{} message $m''$, removes the temporary edge and sends the \impldelegate{} $m'''$ with $m'''.origMessage = m''.origMessage$ and $m'''.origAction = m'''.origAction$ back to $u$. 3) Node $u$ receives the \impldelegate{} and does nothing because the delegated edge already exists in $u$.
By performing this action execution until all \delegateREQ{} messages $m$ in $u.Ch$ with $m.origMessage = \bar{m}$ are gone, we reach a state $M_B$ that is reachable from $S_B$ with the proprerty that (i) no \delegateACK{} and \impldelegate{} messages exist and (ii) the temporary edge and one \delegateREQ{} corresponding to $\bar{m}$ are gone. 
Moreover, $B$ did everything $\bar{a}$ would have done (since the same action is triggered by the \delegateREQ{} message), again with the difference that everytime $\bar{a}$ uses the Delegation primitive, $B$ performs the already described changes.
As we already argued for \timeout, these changes adhere to our desired statement, such that $N_A \equiv M_B$.

To show that $ENG_s(S') = T$, assume that there is a state $S_A^* > S_A''$ in which $A$ adds an explicit edge to $ENG$ or removes an explicit edge from $ENG$.
Then there is also a state $S^*$ that is reachable from $S'$ such that $S^*  \equiv  S_A^*$ (i.e., the set of stable edges of $S^*$ is equal to the set of explicit edges of $S_A^*$).
However, this would contradict to the fact that the graph formed by the stable edges cannot change anymore after $S'$.
On the other hand, since $A$ self-stabilizes $T$ from any weakly-connected graph and none of the transformations destroys the connectivity, we have that $ENG_s(S') = T$.

Therefore, to prove convergence it remains to show that as soon as $ENG_s(S') = T$ holds in a computation of $B$, all temporary edges will eventually vanish.
Observe that in $S'$ and every later state, any \delegateREQ{u,w,SEQ} message initiated by a node $u$ is sent to the same stable neighbor $v$ (since the stable neighborhood of $u$ does not change any more).
By the \mdl property and the construction of $B$, for any temporary edge $(u,w)$ there is a fixed cycle-free path $P_{(u,w)} = (u, u_2, u_3, \dots, u_k)$ via stable edges that this edge takes in $A$ until it is fused with the stable edge $(u_k,w)$.
We define the potential $\Phi := \sum_{node\ w} max_{u: (u,w)\ is\ temporary}(|P_{(u,w)}|-1)$ (where $max_{u: (u,w)\ is\ temporary}(|P_{(u,w)}|-1)$ is $0$ if there is no such edge) and show that as long as $\Phi$ is not zero, it will decrease in finite time.

For an arbitrary node $w$, consider a temporary edge $(u,w)$ such that $|P_{(u,w)}| =: l$ is maximal.
According to the protocol, $u$ will send a \delegateREQ{u,w,SEQ} to $u_2$ that will be answered with a \delegateACK{w,SEQ} message causing $u$ to make $(u,w)$ implicit and forwarding it to $u_2$.
The same happens for all other temporary edges $(v,w)$ with $|P_{(v,w)}| = l$.
After the last such edge has been removed from $v$, $\Phi$ decreases.
Note that $\Phi$ cannot increase since new temporary edges $x,y$ with $|P_{x,y}| > 1$ can only be caused by existing temporary / stable edges.
Thus, $\Phi$ will eventually be zero, meaning that there will be no more temporary edges.
Since no \delegateREQ{} and \delegateACK{} message will be sent from this state on, Invariants~\ref{Inv:DelegateReq} and~\ref{Inv:DelegateAck} hold in this state.

For the closure property, note that once the topology has stabilized no more temporary edges will exist and no more delegations will occur, implying that $B$ behaves exactly as $A$.
Since $A$ fulfills the closure property (and thus maintains the desired topology once stabilized), $B$ also does.
\end{proof}

\section{The generic search protocol}\label{sec:generic_search_protocol}
In this section we describe a generic search protocol such that every protocol in $\pisf$ fulfilling the stable \mdl property satisfies monotonic searchability according to that search protocol.
We assume that when a node $u$ wants to search for a node with identifier $ID$, it performs an \initsearch{ID} action in which a \search{u,ID} message is created.
The search request is regarded as answered as soon as the \search{u,ID} message is either dropped, i.e., it \emph{fails}, or is received by the node $w$ with $id(w)=ID$, i.e. it \emph{succeeds}.

The principle idea of the \emph{generic search protocol} is the following: 
A node $u$ with a \search{u,ID} message does not directly forward this message through the network but buffers it.
Instead, $u$ initiates a probing algorithm whose goal is to either receive the reference of the node $w$ with $id(w)=ID$, or to get a negative response in case this node does not exist or cannot be reached yet.
In the former case, $u$ directly sends \search{u,ID} to $w$.
In the latter case, $u$ drops \search{u,ID}.
Whenever an additional \search{u,ID} message for the same identifier $ID$ is initiated at $u$ while a probing for $ID$ is still in progress, this message is combined with previous \search{u,ID} messages waiting at $u$.

For the probing, a node $u$ with a buffered \search{u,ID} message  periodically initiates a new \probe{} message in its \timeout action.
This \probe{} message contains four arguments:
First, a reference $source$ of the source of the \probe{} message., i.e., a reference of $u$.
Second, the identifier $destID$ of the node that is searched, i.e., $ID$.
Third, a set $Next$ that holds references of all neighbors of $u$ with a closer distance to $destID$ than $id(u)$.
Last, a sequence number $seq$ that is used to distinguish probe messages that belong to different probing processes from the same node and for the same target, i.e., $seq = u.seq[ID]$, where $u.seq[ID]$ is a value stored at $u$.
This is necessary because in each execution of the \timeout action, a new probe message is sent, although upon receival of the first response to such a message, the set of buffered search messages is sent out to the target or dropped completely.
Thus, future replies may arrive afterwards and $u$ has to know that these are outdated.
All in all, $u$ initiates a \probe{source,destID,Next,seq} message and sends this message to the node in $Next$ whose identifier has the maximum distance to $ID$ (i.e., it is the closest to $u$).

Any intermediate node $v$ that receives a \probe{source,destID,Next,seq} message first checks whether $id(v) = destID$.
If so, $v$ sends a reference of itself to $source$ via a \psuccess{destID,dest} message with $dest = v$.
Otherwise, $v$ removes itself from $Next$ and adds all its neighbors to $Next$ that have a closer distance to $destID$ than itself.
If $Next$ is empty after this step, $v$ responds to $source$ via a \pfail{destID,seq} message.
Otherwise, $v$ forwards the \probe{source,destID,Next,seq} message (with the already described changes performed to $Next$) to the node in $Next$ whose identifier has the maximum distance to $ID$.
If the initiator $u$ of a probe receives a \psuccess{destID,dest} or a \pfail{destID,seq} message, it first checks whether $seq \geq u.seq[destID]$, i.e., it checks whether the received message is a response to the current batch of search requests.
If it is from an earlier probe, $u$ simply drops the received message.
Otherwise, $u$ acts depending on the message it received:
In case of a \psuccess{destID,dest} message, $u$ sends out all (possibly combined) \search{u,destID} messages waiting at $u$ to $dest$ (thus stopping the probing).
In case of a \pfail{destID,seq} message, $u$ drops all \search{u,destID} messages waiting  at $u$ to $dest$ (thus also stopping the probing).
In both cases, $u$ additionally increases $u.seq[destID]$ such that probe messages that are still in the system at this point in time cannot have any effects on future requests.
The pseudocode of the generic search protocol and supplementary details can be found in Listing~\ref{algo:search}.

\begin{lstlisting}[mathescape=true,caption= Generic \textsc{Search} protocol,label=algo:search]
$\initsearch{destID}$
 create new message $m=\search{self,destID}$
 if($WaitingFor[destID]=\emptyset$)
   $WaitingFor[destID] \gets \{\}$
   $self.seq \gets self.seq + 1$
   $seq[destID] \gets self.seq$
 //Store the messages to $WaitingFor$ 
 $WaitingFor[destID] \gets WaitingFor[destID] \cup \{m\}$ 

$\probe{source,destID,Next,seq}$
 if($destID = id(self)$)
   if($Next \neq \emptyset$) //can only occur in initial states
     for all $u \in Next$
       send $\impldelegate{u}$ to $self$
   send $\psuccess{destID, self}$ to $source$
   send $\impldelegate{source}$ to $self$
 else //$destID \neq id(self)$
   $Next \gets Next\setminus \{self\} \cup \{ neighbors\ w : dist(id(w),destID) < dist(id(self),destID) \}$
   if($Next = \emptyset$)
     send $\pfail{destID, seq}$ to $source$
     send $\impldelegate{source}$ to $self$ //to maintain connectivity
   else //$Next \neq \emptyset$
     $u \gets argmax\{ dist(id(u),destID) : u \in Next\}$
     if($u$ is not a neighbor of $self$)
       send $\impldelegate{u}$ to $self$
     send $\forwardprobe{source,destID,Next,seq}$ to $u$
 
$\psuccess{destID,dest}$
   send all $m \in WaitingFor[destID]$ to $dest$
   $WaitingFor[destID] \gets \emptyset$
 send $\impldelegate{dest}$ to $self$
 
$\pfail{destID,seq}$
 if($seq \geq seq[destID]$)
   /* The message belongs to the current set 
    * of buffered search requests to $dest$. */
   $WaitingFor[destID] \gets \emptyset$
\end{lstlisting}

Using the protocol as specified above could cause a high dilation because each probe message in each step is always sent to the node with the highest distance to the target in $Next$, even if a shorter path is possible.
Luckily, if there exists a fast routing protocol for the stabilized target topology (i.e., $o(n)$ hops in the worst case), it is possible to speed up search messages in legitimate states (and possibly even earlier).
To achieve this, whenever a node $u$ executes \initsearch{ID}, $u$ also creates a \fastprobe{source,destID} message with $source = u$ and $destID = ID$.
This message is routed according to the fast routing protocol.
If the fast routing protocol is not able to deliver the message to a node with id $destID$, the message is simply dropped.
If, however, the message is successfully delivered to a node $v$ with $ID(v) = destID$, this node responds to $source$ with a \psuccess{destID,v} message after which $u$ acts as specified before.

As the generic search protocol cannot guarantee to function properly under the presence of corrupt messages, we define the following additional invariants that are maintained during the execution of the generic search protocol (that did not start with corrupt messages):
\begin{enumerate} \setcounter{enumi}{2}
 \item If there is a \probe{source,destID,Next,seq} message in $u.Ch$, then
    \begin{enumerate}
      \item $u \in Next$ and $\forall w \in Next \setminus \{u\}: dist(id(w),destID) \leq dist(id(u),destID)$,
      \item $R(Next,ID) \subseteq R(source,ID)$, and
      \item if $v$ exists such that $id(v) = destID$ and $v \notin R(Next,destID)$, then for every admissible state with $source.seq[destID] < seq$, $v \notin R(source,destID)$.
    \end{enumerate}
     If there is a \fastprobe{source, destID} message in $u.Ch$, then 
     \begin{enumerate} \setcounter{enumii}{3}
	\item $u \in R(source, destID)$.
     \end{enumerate}
 \item If there is a \psuccess{destID, dest} message in $u.Ch$, then $id(dest) = destID$ and $dest \in R(u,destID)$.
 \item If there is a \pfail{destID, seq} message in $u.Ch$, then if $v$ exists such that $id(v) = destID$, then for every admissible state with $u.seq[destID] < seq$, $v \notin R(u,destID)$.
 \item If there is a \search{v, destID} message in $u.Ch$, then $id(u) = destID$ and $u \in R(v,destID)$.
\end{enumerate}

We say a protocol for the self-stabilization of a topology is \emph{monotonic-searchability-sufficient} (\emph{\mss}) if (i) all interactions between processes can be decomposed into the primitives in $\mathcal{ISF}$, (ii) it fulfills the stable \mdl property, (iii) it uses the generic search protocol for searching, (iv) no \probe{}, \psuccess{}, \pfail{}, or \search{} message is sent at any other occasion than the ones specified in the generic search protocol, and (v) in every computation of the protocol there is a state in which the first two invariants hold.

Note that Theorem~\ref{thm:conventional_protocols_transformable_such_that_invarA_and_B_hold} implies the following:
\begin{corollary}\label{cor:transformable}
 Any conventional protocol $A \in \pidf$ that self-stabilizes to a strongly-connected topology $T$ and that fulfills the \mdl property can be transformed into an \mss protocol that stabilizes the same topology.
\end{corollary}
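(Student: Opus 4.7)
The strategy is to simply combine Theorem~\ref{thm:conventional_protocols_transformable_such_that_invarA_and_B_hold} with the generic search protocol: the theorem already hands us a protocol $B \in \pisf$ that fulfills the stable \mdl property, stabilizes to $T$, and satisfies Invariants~\ref{Inv:DelegateReq} and~\ref{Inv:DelegateAck} eventually in every computation, so conditions (i), (ii), and (v) of the \mss definition come for free. What remains is to bolt on the generic search protocol to obtain conditions (iii) and (iv).

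Concretely, I would first apply Theorem~\ref{thm:conventional_protocols_transformable_such_that_invarA_and_B_hold} to $A$ to obtain $B$, and then form the combined protocol $C$ by unioning the actions of $B$ with the search actions of Listing~\ref{algo:search} (i.e., \initsearch{}, \probe{}, \psuccess{}, \pfail{}, and optionally \fastprobe{}). Since $A$, as a pure topology-maintenance protocol in \pidf, never uses the message types \probe{}, \psuccess{}, \pfail{}, \search{}, or \fastprobe{}, neither does $B$, so in $C$ these messages are emitted only by the generic search protocol; this gives (iv), while (iii) holds by construction.

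Finally I would verify that adding the search actions preserves (i), (ii), and (v) of $B$. For (i), inspection of Listing~\ref{algo:search} shows that every manipulation of a reference inside a search action is either an introduction, a fusion, or an \impldelegate{}-call, all of which decompose into $\mathcal{ISF}$-primitives. For (ii), the search actions only produce and consume implicit edges riding along the probe path; they never create, delete, or modify stable edges, so the stable \mdl property of $B$ transfers verbatim to $C$ and hence $C$ still stabilizes to $T$. For (v), Invariants~\ref{Inv:DelegateReq} and~\ref{Inv:DelegateAck} constrain only \delegateREQ{} and \delegateACK{} messages, which are not touched by the search actions, so the eventual validity of these invariants inherited from $B$ persists in $C$. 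The main obstacle — really bookkeeping rather than a mathematical difficulty — is the first of these three checks: one must trace every reference-carrying message of the search protocol to make sure each passing or dropping of a reference (in particular those embedded inside \probe{} and \psuccess{} messages, and the seemingly odd self-delegations \impldelegate{w} to $self$) fits into the $\mathcal{ISF}$ discipline. Once this is confirmed, the corollary follows immediately.
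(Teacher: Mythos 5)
Your proposal is correct and follows exactly the route the paper intends: the paper states this corollary without proof as an immediate consequence of Theorem~\ref{thm:conventional_protocols_transformable_such_that_invarA_and_B_hold} (which supplies conditions (i), (ii), and (v) of the \mss definition), with conditions (iii) and (iv) obtained by attaching the generic search protocol. Your additional bookkeeping --- checking that the search actions only use $\mathcal{ISF}$-decomposable reference manipulations and do not disturb the stable edges or the \delegateREQ{}/\delegateACK{} invariants --- is exactly what the paper leaves implicit.
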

For an \mss protocol, we define a state as admissible if all six invariants hold.
In the rest of this section, we prove the following theorem:
\begin{theorem}\label{thm:mss:satisfies:ntms}
 Every \mss protocol satisfies monotonic searchability according to Invariant~1-6.
\end{theorem}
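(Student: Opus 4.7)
The plan is to follow the three-part framework stated after Lemma~\ref{lem:admissible_message_necessary_for_monotonic_searchability}: show that (a) admissibility (i.e., all six invariants hold) is preserved by every action of an \mss protocol, (b) every computation eventually contains an admissible state, and (c) monotonic searchability holds in any computation starting from an admissible state.

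For (a), I would perform a case analysis over every protocol action. Invariants~1 and~2 are preserved by Lemma~\ref{lem:once_A_and_B_hold_then_in_every_subsequent_step_as_well}. For Invariants~3--6 the interesting cases all come from the generic search protocol: a \timeout emits a fresh \probe whose $Next$ and $seq$ fields make Invariant~3 hold by construction; when an intermediate node $v$ forwards a \probe it removes itself from $Next$ and only inserts smaller-distance neighbors, so the receiver lies in the new $Next$ (3a), greedy reach of the new $Next$ remains contained in the greedy reach of the old one (3b), and the $seq$ clause (3c) is inherited; a freshly sent \psuccess satisfies Invariant~4 because it is sent by the destination node, which lies in $R(source, destID)$ by 3a--3b; a freshly sent \pfail satisfies Invariant~5 because $Next$ became empty and 3b then exhausts the greedy reach of $source$; and a \search placed into a channel in response to a \psuccess inherits Invariant~6 from Invariant~4. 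One must also check that Safe-Delegation cannot falsify any of these by editing edges, which is where property~(v) of the \mdl property is crucial: delegations along potential search paths go to strictly closer nodes, so the greedy-reachability sets $R(\cdot, destID)$ never shrink.

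For (b), the definition of \mss gives a state $S$ at which Invariants~1 and~2 hold, and by Lemma~\ref{lem:once_A_and_B_hold_then_in_every_subsequent_step_as_well} they remain true thereafter. From $S$ onward I would argue that every corrupted \probe, \fastprobe, \psuccess, \pfail, or \search message is eventually drained from the channels: an outstanding probe is forwarded at most as many times as there are distinct distance values smaller than its initial recipient's distance to $destID$ (each forwarding strictly decreases this distance at the receiver), after which it emits a single \psuccess or \pfail that is itself consumed in one step, possibly triggering a bounded burst of \search messages that are in turn delivered or dropped. Hence after finitely many additional steps no legacy messages remain, and every subsequently generated message is produced by a clean protocol action in a state where Invariants~1 and~2 already hold, so it satisfies Invariants~3--6 by the closure argument of~(a).

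For (c), suppose a \search{u, id(w)} initiated at time $t$ is successfully delivered. This happened because $u$ received a \psuccess{id(w), w} (from an ordinary probe or a fast probe); Invariant~4 then gives $w \in R(u, id(w))$ at the time of receipt, i.e., a path of explicit edges from $u$ to $w$ on which the distance to $id(w)$ strictly decreases. Consider any \search{u, id(w)} initiated at a time $t' > t$. Using Lemma~\ref{lem:once_A_and_B_hold_then_in_every_subsequent_step_as_well} together with property~(v) of the \mdl property, the greedy path witnessing $w \in R(u, id(w))$ remains valid at time $t'$: every explicit edge on it either persists or is delegated to a strictly closer node, which can only insert additional greedy hops. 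A straightforward induction on the position of the probe shows that the generic search protocol's forwarding rule performs a best-first exploration that visits every greedy-reachable candidate before falling back on \pfail, so the new probe reaches $w$, a \psuccess is returned, and the new \search succeeds. The main obstacle I anticipate is exactly this monotonicity of the greedy reachability sets $R(\cdot, destID)$ under Safe-Delegation: Lemma~\ref{lem:once_A_and_B_hold_then_in_every_subsequent_step_as_well} guarantees monotonicity of ordinary connectivity $E(\cdot, \cdot)$ via explicit edges but not immediately of the strictly-decreasing-distance variant, and bridging this gap via property~(v) of the \mdl property while simultaneously discharging closure of Invariants~3--5 across every forwarding case is the most delicate part of the argument.
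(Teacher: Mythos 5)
Your overall decomposition (closure of admissibility, eventual admissibility, correctness from admissible states) coincides with the paper's, and your treatment of parts (a) and (b) matches the paper's Lemmas~\ref{lem:first_three_hold}--\ref{lem:admissible_suffix} and the potential-based termination argument of Lemmas~\ref{lem:in_admissible_distance_decreasing} and~\ref{lem:in_admissible_probe_always_answered}. However, there is a genuine gap in how you discharge Invariant~5 and, consequently, in part (c). Your claim that a freshly sent \pfail{} satisfies Invariant~5 ``because $Next$ became empty and 3b then exhausts the greedy reach of $source$'' does not work: when $Next=\emptyset$, Invariant~3b) reads $\emptyset \subseteq R(source,destID)$ and is vacuous, so it certifies nothing about $source$'s reach. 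What actually carries the load is the historically quantified clause 3c): $w \notin R(Next,destID)$ (trivially true for empty $Next$) implies $w \notin R(source,destID)$ in \emph{every admissible state with} $source.seq[destID] < seq$, which is literally Invariant~5. This is not cosmetic, because 3c) and 5 are precisely the mechanism by which the paper obtains monotonicity.

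For part (c) you instead attempt a direct argument that the later probe reaches $w$, via a ``best-first exploration visits every greedy-reachable candidate'' induction. That claim is delicate (the graph mutates while the probe is in flight, and $Next$ is augmented from the receiver's neighborhood at receipt time, not at initiation time), and the paper never proves it. The paper argues by contradiction: if a later \search{u,destID} is dropped, it is dropped upon receipt of a \pfail{destID,seq} message with $seq \geq u.seq[destID]$, and Invariant~5 then asserts $w \notin R(u,destID)$ in every admissible state with $u.seq[destID] < seq$ --- in particular in the state in which the earlier \psuccess{} was received, where Invariant~4 gives $w \in R(u,destID)$ and after which $u.seq[destID]$ was incremented; contradiction (Lemma~\ref{lem:in_admissible_search_correct}). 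This route also handles the batching of waiting \search{} messages and the sequence-number bookkeeping that your sketch glosses over. Finally, you do not address non-triviality at all: the theorem (as the paper reads it) also requires that eventually every search to a reachable node succeeds, which the paper obtains from Lemma~\ref{lem:u_in_R_then_delivered} together with $E(u,v) \Leftrightarrow v \in R(u,id(v))$ in legitimate states.
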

On a side note, the following result follows directly from the description of the generic search protocol:
\begin{corollary}
 Every \mss protocol $P$ that stabilizes to a topology $T$ and in which the generic search protocol uses a routing strategy with a worst-case routing time of $O(T(n))$ for the fast search as described in the protocol, then $P$ answers successful search requests in legitimate states in time $O(T(n))$.
\end{corollary}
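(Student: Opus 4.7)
The plan is to trace, in a legitimate state, the precise sequence of messages produced by a successful search and to bound each phase by the given worst-case routing time. Because $P$ stabilizes to $T$, closure gives that in a legitimate state the stable edges of every node are exactly those of $T$, no temporary edges remain, and no \delegateREQ{} or \delegateACK{} messages are in flight (Invariants~\ref{Inv:DelegateReq}--\ref{Inv:DelegateAck} have converged). Hence any probe issued from this point on operates on a graph identical to $T$ that does not change over time, and in particular the fast routing strategy is executed on exactly the topology for which the $O(T(n))$ bound is assumed.

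First I would unpack the effect of executing \initsearch{destID} at a node $u$ in a legitimate state: by the description of the fast search, $u$ simultaneously buffers the \search{u,destID} message in $WaitingFor[destID]$ and issues a \fastprobe{u,destID} message routed according to the fast routing protocol. If the search is to succeed, there must be a node $w$ with $id(w)=destID$ reachable from $u$ in $T$, so by the hypothesis on the fast routing protocol the \fastprobe{} arrives at $w$ within $O(T(n))$ asynchronous steps.

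Next I would follow the response phase. By the $destID = id(self)$ branch of the \probe{}/\fastprobe{} handler, $w$ reacts by sending a single \psuccess{destID,w} message directly to $source = u$ (the reference $u$ is carried inside the probe and, by the channel model, $w$ can post a message into $u.Ch$ as soon as it owns $u$'s reference); this is one additional step. When $u$ processes the \psuccess{} handler, it forwards every queued \search{u,destID} message --- in particular the one under consideration --- directly to $w$ in one further step, at which point the search is answered. Summing the three phases yields $O(T(n))+O(1)+O(1)=O(T(n))$, as required.

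The main obstacle, and the only non-routine part of the argument, is ruling out that the \emph{slow} \probe{} returns a \pfail{destID,seq} message that reaches $u$ before the fast \psuccess{} does and empties $WaitingFor[destID]$ prematurely. I would handle this via the topology assumption on $P$ together with the generic search protocol: in a legitimate state there exists a path of explicit edges from $u$ to $w$ along which the distance to $destID$ strictly decreases, and by Invariant~3 and the stable \mdl property the slow \probe{} walks precisely along such edges; therefore it also terminates at $w$ with a \psuccess{} rather than a \pfail{}, so the buffer $WaitingFor[destID]$ is never cleared by a negative reply before the search has been dispatched. Combined with the three-phase bound above, this gives the claimed $O(T(n))$ time for answering any successful search request in a legitimate state.
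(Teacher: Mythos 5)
Your proof is correct and takes essentially the same route as the paper, which gives no explicit argument but states that the corollary ``follows directly from the description of the generic search protocol'': your three-phase accounting (fast probe delivered in $O(T(n))$ by hypothesis on the routing strategy over the now-static topology $T$, one hop for the \psuccess{} reply, one hop to dispatch the buffered \search{} messages) is exactly that intended reasoning. Your additional care in ruling out a premature \pfail{} is also sound and consistent with the paper's machinery, since in a legitimate state the target lies in $R(u,destID)$, so Invariant~5 together with Lemma~\ref{lem:u_in_R_then_delivered} precludes any current-sequence \pfail{} from clearing $WaitingFor[destID]$ before the search is answered.
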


In order to prove Theorem~\ref{thm:mss:satisfies:ntms}, we need the following result, which is a corollary from Lemma~\ref{lem:once_A_and_B_hold_then_in_every_subsequent_step_as_well} and (v) of the \mdl property:
\begin{corollary}\label{cor:once_in_R_then_always}
 In any computation of an \mss protocol that starts from a state in which the first two invariants hold, if $v \in R(u,ID)$ in state $S$ then $v \in R(u,ID)$ in every state $S' \geq S$.
\end{corollary}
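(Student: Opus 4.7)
The plan is to prove by induction on the computation length after state $S$ that a distance-decreasing explicit path toward $ID$ from $u$ to $v$ survives (possibly after transformation) in every successor state. Concretely, fix a witness path $P = (u = x_0, x_1, \dots, x_k = v)$ of explicit edges with $dist(id(x_{i+1}),ID) < dist(id(x_i),ID)$ for every $i$ in state $S$. By Lemma~\ref{lem:once_A_and_B_hold_then_in_every_subsequent_step_as_well} (applied to each edge separately), Invariants~\ref{Inv:DelegateReq} and~\ref{Inv:DelegateAck} continue to hold and $E(x_i, x_{i+1})$ is preserved in every $S' \geq S$. What is not immediate from Lemma~\ref{lem:once_A_and_B_hold_then_in_every_subsequent_step_as_well} alone is that the replacement explicit paths are themselves \emph{distance-decreasing} toward $ID$; that extra ingredient has to come from property (v) of the \mdl property.

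For the inductive step I would focus on the only way an explicit edge can disappear, namely the processing of a $\delegateACK{w,eseq}$ message with matching sequence number inside the Safe-Delegation primitive. If the edge $(x_i, x_{i+1})$ of the current witness path is turned temporary/implicit this way, then the Safe-Delegation guarantees that there is a stable neighbor $y$ of $x_i$ to which the edge is delegated, so that $(x_i, y)$ is already a stable explicit edge and $(y, x_{i+1})$ was created explicitly at $y$ when the matching $\delegateREQ{}$ was processed. Because $(x_i, x_{i+1})$ is distance-decreasing toward $ID$, it qualifies as an edge ``that may be traversed by the search protocol'' toward $ID$ in the sense of property (v), so the delegation target $y$ satisfies $dist(id(y),ID) < dist(id(x_i),ID)$, making $(x_i, y)$ distance-decreasing as well.

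It remains to argue that the companion edge $(y, x_{i+1})$ is itself distance-decreasing, i.e.\ $dist(id(x_{i+1}),ID) < dist(id(y),ID)$. Since $(y, x_{i+1})$ now inherits the role of an edge usable by search toward $ID$, any subsequent delegation of it must also respect property (v) and hence proceed to an even closer node. By iterating this observation, every future transformation of $(y, x_{i+1})$ produces a chain of intermediate nodes whose distances to $ID$ strictly decrease, so the chain is internally distance-decreasing. Concatenating the distance-decreasing replacement chains obtained this way for every edge of $P$ yields an explicit distance-decreasing path from $u$ to $v$ in $S'$, establishing $v \in R(u,ID)$ there.

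The main obstacle I anticipate is the third paragraph: arguing cleanly that property (v) applies not only to the first delegation of an edge on $P$ but to \emph{every} successive delegation of the fragments produced. The cleanest way to formalize this is to set up an auxiliary induction on the delegation depth inside the computation, using the fact that property (v) explicitly quantifies over implicit edges arising from delegations of previously search-traversable edges, so the property propagates along the entire sequence of Safe-Delegations that eventually replace a single edge of $P$. Once this propagation is in place, the claim of the corollary reduces to concatenating distance-decreasing fragments, which is immediate.
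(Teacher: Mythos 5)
Your proof combines exactly the two ingredients the paper itself invokes---Lemma~\ref{lem:once_A_and_B_hold_then_in_every_subsequent_step_as_well} for the persistence of explicit connectivity and property (v) of the \mdl property for preserving distance-decreasingness under Safe-Delegation---which is precisely the paper's own (one-line, unelaborated) justification of this corollary. Your additional work in the third and fourth paragraphs, in particular the auxiliary induction over successive delegations of the replacement fragments, fills in details the paper leaves entirely implicit, and the obstacle you flag there (that property (v) must be seen to propagate to every fragment, not just the first delegation) is a real gap in the paper's presentation rather than a defect of your argument relative to it.
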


We first prove the following lemma:
\begin{lemma}\label{lem:admissible_suffix}
    In any computation of an \mss protocol, if there is an admissible state $S$, then all subsequent states will be admissible as well.
    Furthermore, in every computation of an \mss protocol, an admissible state exists.
\end{lemma}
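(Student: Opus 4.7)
The plan is to prove the two halves of the lemma separately, each by a case analysis over the actions of an \mss protocol.

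For closure, I would first invoke Lemma~\ref{lem:once_A_and_B_hold_then_in_every_subsequent_step_as_well} to carry Invariants~\ref{Inv:DelegateReq} and~\ref{Inv:DelegateAck} forward, and then verify that Invariants~3--6 hold in every successor state as well. The key observation is that Invariants~3--6 all assert the existence of paths in the sets $R(\cdot,ID)$, and by Corollary~\ref{cor:once_in_R_then_always} the relation $v\in R(u,ID)$ is monotone in time; hence it suffices to check that no violating \probe{}, \psuccess{}, \pfail{}, or \search{} message is ever placed into a channel by a protocol action. I would then go through each occasion at which such a message is emitted by the generic search protocol --- the probe initiated on \timeout, the three branches of \probe{} (success-answer, failure-answer, forwarding), and the \search{} dispatch inside \psuccess{} --- and read off the corresponding invariant from the guards of the action. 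For instance, \psuccess{destID,self} is only emitted when $id(self)=destID$, giving Invariant~4 directly; a \pfail{destID,seq} sent when the updated $Next$ is empty inherits its non-reachability promise from Invariant~3(c) on the incoming \probe{}; and each \search{} dispatched in \psuccess{} inherits Invariant~6 from Invariant~4. Property~(iv) of \mss rules out emitters outside of the generic search protocol.

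For existence, property~(v) of an \mss protocol guarantees a state $S_0$ in which Invariants~\ref{Inv:DelegateReq} and~\ref{Inv:DelegateAck} hold, and by the closure argument these persist thereafter. The same closure argument also implies that from $S_0$ onward the protocol creates no new \probe{}, \psuccess{}, \pfail{}, or \search{} message violating Invariants~3--6. Thus the only obstacle to admissibility is the finite set of such messages that are corrupt at $S_0$. By fair message receipt each of them is eventually processed, and the processing chain triggered by a single corrupt message is bounded: a corrupt \probe{} travels along edges of strictly decreasing distance to the target and so terminates in at most $n$ hops, producing at most one \psuccess{} or \pfail{}; a \psuccess{} merely dispatches the buffered \search{} messages to a single node; a \pfail{} merely clears $WaitingFor$; and a \search{} message has no protocol-level follow-up. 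After finitely many steps all corrupt instances of these message types have been flushed, yielding an admissible state $S_1\ge S_0$.

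The hardest part will be the closure analysis for the \probe{} forwarding step, specifically the preservation of Invariant~3(c) (and, symmetrically, Invariant~5). When a receiver $u$ removes itself from $Next$ and injects all of its closer-distance neighbors, one must verify that the promise of non-reachability of any node with id $destID$ --- which previously pertained to $source$ via the old $Next$ --- transfers to the updated $Next$ and still implies the analogous promise for $source$, modulo the sequence-number clause. This is precisely the point at which Corollary~\ref{cor:once_in_R_then_always} and the fine-grained definition of $R$ via paths of strictly decreasing distance interact, and the argument has to treat temporary and implicit edges introduced by the Safe-Delegation primitive uniformly.
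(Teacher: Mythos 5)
Your proposal matches the paper's proof in both structure and substance: the paper likewise establishes closure by checking every emission point of \probe{}, \psuccess{}, \pfail{} and \search{} messages against the invariants carried by the triggering message (using Corollary~\ref{cor:once_in_R_then_always} for persistence of the $R$-based clauses), and establishes existence by starting from the state guaranteed by property~(v) of an \mss protocol and flushing the finitely many corrupt messages, whose causal chains are finite; the paper merely organizes this as a cascade (Lemmas~\ref{lem:first_three_hold} and~\ref{lem:first_five_hold} treat Invariants 1--3 and then 1--5 before the final step adds Invariant~6). One small correction to your termination argument: a \probe{} chain does not travel along strictly distance-decreasing edges (the message is always forwarded to the \emph{farthest} node in $Next$, and the first hop of a corrupt probe violating Invariant~3a) may even increase the distance); the finiteness of the chain instead follows from the non-increase of $\max_{w\in Next} dist(id(w),destID)$ after the first hop together with a no-revisit argument, or equivalently from the potential $\Psi$ used in Lemma~\ref{lem:in_admissible_distance_decreasing}.
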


The following sequence of lemmata constitutes the proof of Lemma~\ref{lem:admissible_suffix}.
We say a message $m$ \emph{causes} a message $m'$ if the message $m'$ was sent in an action handling the receipt of $m$ or a message caused by $m$.
\begin{lemma}\label{lem:first_three_hold}
    In any computation of an \mss protocol, if the first three invariants hold in a state $S$, they will hold in every state $S' \geq S$.
    Furthermore, in every computation of an \mss protocol, there exists a state in which the first three invariants hold.
\end{lemma}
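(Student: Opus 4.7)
The plan is to prove the closure statement by induction over action executions, then to combine it with property (v) of the \mss definition and a termination argument for initial probe messages to obtain the existence statement.

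For closure, assume Invariants 1--3 hold in state $S$ and consider the step $S \to S'$ obtained by an enabled action. Invariants 1 and 2 are preserved by Lemma~\ref{lem:once_A_and_B_hold_then_in_every_subsequent_step_as_well} directly. Since Invariants 1 and 2 hold in $S$, Corollary~\ref{cor:once_in_R_then_always} applies from $S$ onwards, so the sets $R(u,ID)$ can only grow. For Invariant 3 I would split into cases according to whether the step creates, forwards, or merely coexists with a \probe{}/\fastprobe{} message. In the creation case (executed in \initsearch{} or \timeout{} by some node $u$), conditions 3(a)--(d) follow directly from the protocol code: $u$ inserts itself together with all closer neighbors into $Next$, giving 3(a); $R(Next,destID) \subseteq R(u,destID) = R(source,destID)$ because every closer neighbor of $u$ lies in $R(u,destID)$, giving 3(b); 3(d) is analogous; and 3(c) is vacuous at creation, since it uses strict inequality $source.seq[destID] < seq$ while at creation $source.seq[destID] = seq$. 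In the forwarding case the node $u$ holding the probe removes itself from $Next$ and inserts all closer neighbors. Using the identity $R(u,destID) \setminus \{u\} = \bigcup_{w \text{ closer neighbor of } u} R(w,destID)$ (any strictly-decreasing path leaving $u$ must begin with an edge to a closer neighbor), conditions 3(b) and 3(c) carry over to the new $Next$; 3(a) and 3(d) are immediate from the protocol. In the coexistence case, the only worry is that edge removals or newly-formed implicit edges could invalidate the $R$-containments; but Corollary~\ref{cor:once_in_R_then_always} makes $R$ monotone non-decreasing, so all containments in 3(b)--(d) are preserved, while 3(a) is purely structural on $Next$ and unaffected by graph changes.

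For the existence claim, condition (v) of the definition of an \mss protocol guarantees that every computation contains a state $S_0$ in which Invariants 1 and 2 hold; by Lemma~\ref{lem:once_A_and_B_hold_then_in_every_subsequent_step_as_well} they persist thereafter. I then argue that Invariant 3 also holds in some later state $S_1 \geq S_0$. Starting from $S_0$, every \probe{} message present in the initial configuration is consumed in finitely many forwarding steps: the minimum distance to $destID$ among elements of $Next$ strictly decreases at each forwarding step, so eventually the probe either triggers a \psuccess{}/\pfail{} response or is dropped. Similarly, each initial \fastprobe{} message is consumed by the fixed routing protocol in a bounded number of hops. Hence there is a state $S_1 \geq S_0$ in which every \probe{} and \fastprobe{} message in the system was created by an execution of the protocol after $S_0$. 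By condition (iv) of \mss, such messages are created only via the \initsearch{}/\timeout{} and forwarding rules analyzed in the closure argument, so they satisfy Invariant 3 at their creation time; applying the closure argument from there yields Invariant 3 at $S_1$.

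The main obstacle I expect is the preservation of Invariants 3(b) and 3(c) during forwarding, because one might worry that replacing $u$ in $Next$ by its strictly closer neighbors shrinks $R(Next,destID)$ below what is required for 3(c). Resolving this cleanly hinges on the identity $R(u,destID) \setminus \{u\} = \bigcup_{w} R(w,destID)$ over closer neighbors $w$ of $u$, together with the fact that $u \neq v$ whenever $v$ is the hypothetical node with $id(v) = destID$ (since the probe would otherwise have been answered at $u$ itself by the first branch in \probe{}). A secondary subtlety is that Invariant 3(c) is a universally quantified statement over future admissible states with $source.seq[destID] < seq$; its preservation depends on the fact that every update of $source.seq[destID]$ in \psuccess{}/\pfail{} handling or in \initsearch{} eventually makes this quantification vacuous for any outdated probe, which one verifies directly from the pseudocode.
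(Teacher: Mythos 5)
Your overall architecture matches the paper's: establish the invariants at message creation, show they are preserved under forwarding, and for existence combine condition (v) of the \mss definition with a termination argument for the causal chains of pre-existing corrupt probes. However, there is a genuine gap in your handling of Invariant~3c). You declare 3c) ``vacuous at creation'' because $source.seq[destID] = seq$ at that moment, and in your closing paragraph you describe 3c) as quantifying over \emph{future} admissible states. Both readings are wrong: since $source.seq[destID]$ is monotonically non-decreasing and equals $seq$ when the probe is created in \timeout, the states with $source.seq[destID] < seq$ are precisely the \emph{earlier} admissible states (those before the current batch of search requests began), and 3c) makes a nontrivial claim about them. This is exactly where the paper has to do work: at creation, $R(source,destID)$ is $\{source\}$ together with $R(Next,destID)$, so $v \notin R(Next,destID)$ and $v \neq source$ give $v \notin R(source,destID)$ in the current state, and then Corollary~\ref{cor:once_in_R_then_always} (monotonicity of $R$) pushes this \emph{backwards} to every previous state, in particular every state with $source.seq[destID] < seq$. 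Your proposal never makes this argument, and your remark that increasing $source.seq[destID]$ ``eventually makes this quantification vacuous'' is backwards for the same reason. Since 3c) at creation is the load-bearing step that later justifies Invariant~5 and hence the monotonicity of search, this omission is not cosmetic.

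A secondary inaccuracy: your termination argument for initial probes claims that the \emph{minimum} distance to $destID$ over $Next$ strictly decreases at each forwarding step. It need not, since the minimum may be attained by a node of $Next$ other than the current holder and is then unchanged; the correct decreasing quantity is a potential such as $\Psi(Next,destID)=\sum_{u\in Next} n^{dist(u,destID)}$ (cf.\ Lemma~\ref{lem:in_admissible_distance_decreasing}), or the paper's observation that each received probe causes at most one forwarded probe with non-increasing maximum distance, so each corrupt probe spawns only finitely many violating descendants. The finiteness conclusion you need is still true, but your stated justification for it is not.
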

\begin{proof}
    Consider a state $S''$ in which the first two invariants hold.    
    We show that the following six statements hold in $S''$ and every subsequent state:
 \begin{enumerate}
 \item Every \probe{} message sent in the \timeout action conforms to Invariant~3a)-c).
 \item Every \fastprobe{} message sent in the \initsearch{} action conforms to Invariant~3d).
 \item Every \probe{} message sent in the action executed on receipt of a \probe{} message conforms to Invariant~3a).
 \item Every \probe{} message sent in the action executed on receipt of a \probe{} message that conforms to Invariant~3b) and Invariant~3c) also conforms to Invariant~3b) and Invariant~3c).
 \item Every \fastprobe{} message sent in the action executed on receipt of a \fastprobe{} message that conforms to Invariant~3d) also conforms to Invariant~3d).
 \item Every \probe{} message violating the third invariant can cause at most a finite number of \probe{} message that violate Invariant~3b) and~3c).
\end{enumerate}
 
 For the first statement, observe that by construction the protocol ensures that every \probe{source, destID, Next, seq} sent in \timeout conforms to Invariant~3a) and~3b).
 Furthermore, $source.seq[destID]$ is monotonically increasing for every node $u$ and every $destID$.
 By Corollary~\ref{cor:once_in_R_then_always}, for any node $v$, $v \notin R(Next,destID)$ implies that $v \notin R(source,destID)$ in every previous state and, in particular, every state with $source.seq[destID] < seq$. 
 For similar reasons, the second statement also holds.
 
 The third statement follows directly from the protocol.
 
 For the fourth statement, consider an arbitrary \probe{source, destID, Next, seq} message received by a node $w$ that conforms to Invariant~3b) and Invariant~3c) and that causes the sending of a new \probe{source, destID, Next', seq} message(for some set $Next'$).
 According to the protocol, $R(Next',destID) \cup \{w\} = R(Next,destID)$, thus $R(Next',destID) \subseteq R(Next,destID) \subseteq R(source,destID)$ (the last set relationship is due to Invariant~3b)) and Invariant~3b) holds for the new message.
 Now assume that $v$ exists such that $id(v) = destID$ and $v \notin R(Next', destID)$.
 If $w = v$, $w$ would not have sent the new message, thus we may assume $w \neq v$.
 This yields $v \notin R(Next,destID)$ (since $R(Next,destID) = R(Next',destID) \setminus \{w\}$).
 By Invariant~3c) of the message received at $w$, we have that Invariant~3c) also holds for the message sent from $w$.
 
 For the fifth statement, consider an arbitrary \fastprobe{source, destID} message received by a node $w$ that conforms to Invariant~3d) and that causes the sending of a new \fastprobe{source, destID} message to some node $v$.
 Since $v$ is a neighbor of $w$, $v \in R(w,destID)$.
 By Invariant~3d), $w \in R(source,destID)$, thus also $v \in R(source,destID)$.
 By Corollary~\ref{cor:once_in_R_then_always}, this can never be rendered untrue in any subsequent state. 
 
 For the sixth statement, note that every \probe{source, destID, Next, seq} message received at a node $v$ causes at most one new \probe{source, destID, Next', seq} message sent from $v$ with the property that $max_{u' \in Next'} dist(u', destID) \leq max_{u \in Next} dist(u, destID)$ (due to the protocol and Invariant~3a)).
 Since the number of nodes is finite, this implies the \probe{source, destID, Next, seq} message can cause only a finite number of \probe{} message, which finishes the proof of the sixth statement.

 For the first claim of the lemma, assume there is a state $S$ such that the first three invariants hold. 
 Then the first five statements and Lemma~\ref{lem:once_A_and_B_hold_then_in_every_subsequent_step_as_well} yield that they will hold in every state $S' \geq S$.
 
 For the second claim of the lemma, note that a state $S''$ as defined above exists in every computation of an \mss protocol (according to the definition of an \mss protocol).
 The six statements yield that messages initiated in \timeout and messages conforming to Invariant~3 can only cause new messages conforming to Invariant~3 and that all other messages will eventually be gone.
 Thus, there is a state $S' \geq S''$ such that all messages conform to Invariant~3.
\end{proof}

\begin{lemma}\label{lem:first_five_hold}
    In any computation of an \mss protocol, if the first five invariants hold in a state $S$, they will hold in every state $S' \geq S$.
    Furthermore, in every computation of an \mss protocol, there exists a state in which the first five invariants hold.
\end{lemma}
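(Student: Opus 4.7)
The plan is to bootstrap on Lemma~\ref{lem:first_three_hold} in essentially the same way that lemma bootstrapped on Lemma~\ref{lem:once_A_and_B_hold_then_in_every_subsequent_step_as_well}: once Invariants~1--3 are stably in place, every newly created \psuccess{} and \pfail{} message will automatically carry a witness for Invariants~4 and~5, because such messages are generated only in response to \probe{} messages, whose conformance to Invariant~3 constrains the values transmitted. The argument therefore splits into a short preservation argument and an existence argument that consumes the finitely many possibly bad initial \psuccess{}/\pfail{} messages via fair message receipt.

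For preservation, assume Invariants~1--5 hold in $S$ and consider a successor state $S'$. Invariants~1--3 persist by Lemma~\ref{lem:first_three_hold}. Take a \psuccess{destID,dest} message in some $u.Ch$ in $S'$. If it was already in $u.Ch$ in $S$, the identity $id(dest) = destID$ is immutable, and Corollary~\ref{cor:once_in_R_then_always} keeps $dest \in R(u,destID)$ true. If it is newly created in the transition, then some node $v$ with $id(v)=destID$ received a \probe{u,destID,Next,seq} message and sent \psuccess{destID,v}; by Invariant~3a) $v \in Next$, and by Invariant~3b) $R(Next,destID) \subseteq R(u,destID)$, so $v \in R(v,destID) \subseteq R(u,destID)$, which is exactly Invariant~4. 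For a \pfail{destID,seq} message in $u.Ch$: if it was already in $u.Ch$ in $S$, the invariant's conclusion quantifies over admissible states reachable from the current state, and any state reachable from $S'$ is reachable from $S$, so the clause carries over unchanged; if it was freshly produced, the sending node encountered an empty updated set $Next'$, hence $R(Next',destID)=\emptyset$, so Invariant~3c) applied to the triggering \probe{} yields Invariant~5 for the new \pfail{} message verbatim.

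For existence, Lemma~\ref{lem:first_three_hold} yields a state $S^\star$ after which Invariants~1--3 hold in every subsequent state. The finitely many \psuccess{} and \pfail{} messages present in $S^\star$ are eventually consumed by fair message receipt. By the preservation argument above, every \psuccess{} or \pfail{} message that is \emph{created} after $S^\star$ conforms to Invariants~4 and~5. Thus there is a state $S^{\star\star} \ge S^\star$ in which no non-conforming \psuccess{}/\pfail{} message remains in any channel, and all five invariants hold; combining this with the preservation argument finishes the lemma.

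The only mildly subtle point, which I expect to be the main obstacle, is the quantifier structure of Invariant~5: one has to read ``for every admissible state with $u.seq[destID] < seq$'' as ranging over future admissible states of the current computation compatible with the message still being in $u.Ch$ and with the sequence-number bound. Once this is fixed, preservation of Invariant~5 for old messages reduces to transitivity of reachability, and preservation for fresh messages reduces to the case $Next'=\emptyset$ of Invariant~3c). No genuinely new topological argument beyond Corollary~\ref{cor:once_in_R_then_always} is required.
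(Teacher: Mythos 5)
Your proposal follows essentially the same route as the paper: bootstrap on Lemma~\ref{lem:first_three_hold}, show that every freshly generated \psuccess{} (via Invariants~3a) and~3b)) and \pfail{} (via Invariant~3c)) conforms, observe that old conforming messages stay conforming (immutability of ids plus Corollary~\ref{cor:once_in_R_then_always}), and flush the finitely many bad initial messages by fair message receipt. The one omission is your claim that \psuccess{} messages are generated only in response to \probe{} messages: the generic search protocol also lets a node with the target id answer a \fastprobe{source,destID} message with a \psuccess{}, and the paper covers this case separately using Invariant~3d), which directly gives $v \in R(source,destID)$; adding that one-line case makes your argument complete.
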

\begin{proof}
   Consider a state $S''$ in which the first three invariants hold.    
   By Lemma~\ref{lem:first_three_hold}, they will also hold in every subsequent state.
 We show that in $S''$ and every subsequent state, no receipt of a \probe{} message causes the sending of a \psuccess{} or \pfail{} message that violates Invariant~4, or Invariant~5, respectively.
 Observe that according to the protocol, no other action causes a node to send a \psuccess{} or \pfail{} message.

 First we show that no receipt of a \probe{} or a \fastprobe{} message can cause the sending of a \psuccess{} message  violating Invariant~4.
 To be more specific, consider a node $v$ that receives a \probe{source, destID, Next, seq} message and this causes $v$ to send a \psuccess{destID',dest'} message to a node $u$.
 In this case, $id(v) = destID$ must hold.
 Furthermore, according to the protocol, $destID' = destID$, and $dest' = v$.
 By Invariant~3a), when $v$ receives the \probe{} message, $v \in Next$ and by Invariant~3b), $R(Next,destID) \subseteq R(source,destID)$, i.e. $v \in R(source,destID)$.
 Thus, the \psuccess{} message sent by $v$ conforms to Invariant~4.
 For the other case, assume that a node $v$ receives a \fastprobe{source, destID} message and this causes $v$ to send a \psuccess{destID',dest'} to a node $u$.
 In this case, also $id(v) = destID$ must hold.
 Furthermore, according to the protocol, $destID' = destID$, and $dest' = v$. 
 By Invariant~3d), when $v$ receives the \fastprobe{} message, $v \in R(source,destID)$.
 Thus, the \psuccess{} message sent by $v$ conforms to Invariant~4 in this case, too.
 
 Second we show that no receipt of a \probe{} message can cause the sending of a \pfail{} message that violates Invariant~5.
 Again, consider a node $v$ that receives a \probe{source, destID, Next, seq} message causing $v$ to send a \pfail{destID',seq'} to a node $u$.
 According to the protocol $destID' = destID$, $seq' = seq$ and $u = source$ must hold.
 If there is a node $w$ such that $id(w) = destID$, then by Invariant~3c) for the \probe{source, destID, Next, seq} message received by $v$, for every admissible state $u.seq[destID] < seq$, $w \notin R(u,destID)$.
 Thus, the \pfail{} message sent by $v$ conform to Invariant~5.
 
 For the first claim of the lemma, assume there is a state $S$ such that the first five invariants hold.
 By what we showed above, they will also hold in every state $S' \geq S$.
 
 For the second claim of the lemma, note that a state $S''$ as defined above exists in every computation of an \mss protocol by Lemma~\ref{lem:first_three_hold}.
 Since we showed that no new message can violate Invariant~4 or Invariant~5, the state $S' \geq S''$ in which all \psuccess{} or \pfail{} messages that violate Invariant~4 or Invariant~5 and that are in incoming message channels in state $S''$ have been received is a state in which the first five invariants hold.
\end{proof}

Finally, we can use these results to prove Lemma~\ref{lem:admissible_suffix}:
\begin{proof}
    Consider a state $S''$ in which the first five invariants hold.
    By Lemma~\ref{lem:first_five_hold}, they will also hold in every subsequent state.   
 We show that in $S''$ and every subsequent state, no \search{} message that violates Invariant~6 can be sent.
 
 According to the protocol, the only occasion at which a \search{v,destID} message is sent to a node $u$ is if node $v$ received a \psuccess{destID',dest} message with $destID' = destID$ and $u = dest$.
 Invariant~4 yields that $id(dest) = destID'$, implying $id(dest) = destID$, and $dest \in R(v,destID')$, implying $u \in R(v,destID)$.
 This implies that every new message conforms to Invariant~6, which completes the proof of the first claim of the lemma.

 For the second claim of the lemma, note that a state $S''$ as defined above exists in every computation of an \mss protocol by Lemma~\ref{lem:first_five_hold}.
 Consider the state $S \geq S''$ in which all \search{} messages violating Invariant~6 that are in incoming message channels in $S''$ have been received.
 $S$ is a state such that the first six invariants hold, i.e., $S$ is an admissible state.
\end{proof}

Next, we are going to prove the following lemma:
\begin{lemma}\label{lem:in_admissible_search_correct}
 If a node $u$ has a \search{u,destID} message waiting at $u$ in an admissible state, then this message will eventually be delivered or dropped.
 In the former case, it will be delivered to the node $w$ with $id(w) = destID$ (which exists in this case).
 In the latter case, either there is no node with id $destID$ or all earlier \search{u,destID} messages that have been waiting at $u$ in admissible states have been or will be dropped as well.
 \end{lemma}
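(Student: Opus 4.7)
The plan is to prove the lemma in two stages: first, establish that the waiting search message will be handled (either sent or dropped) in finite time, and second, use Invariants~4 and~5 together with Corollary~\ref{cor:once_in_R_then_always} to pin down which of the two outcomes occurs and, in the drop case, to relate it to earlier batches of search messages at $u$.

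For the first stage, I would observe that as long as $WaitingFor[destID]$ is nonempty, the \timeout action of $u$ repeatedly initiates new \probe{} messages carrying the current value $u.seq[destID]$; by weakly fair action execution this happens infinitely often, and crucially $u.seq[destID]$ does not change during this period because \initsearch{} only bumps it when $WaitingFor[destID]$ is empty. Each individual probe terminates: by Invariant~3a) and the protocol's forwarding rule, the distance to $destID$ strictly decreases at every hop, so in finitely many steps the probe either reaches a node with identifier $destID$ (yielding \psuccess) or reaches a node whose updated $Next$ is empty (yielding \pfail), and in either case a response is sent back to $u$. Fair message receipt then ensures that $u$ eventually processes such a response, and since its sequence number matches the still-unchanged $u.seq[destID]$, the response actually empties $WaitingFor[destID]$, thereby handling the waiting search message.

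For the second stage, suppose the processed response is \psuccess{destID, dest}. By Invariant~4, $id(dest) = destID$ (so a node with that identifier exists) and the waiting \search{u, destID} message is forwarded to $dest$, giving the first conclusion of the lemma. Suppose instead the processed response is \pfail{destID, seq} and the waiting messages in $WaitingFor[destID]$ are dropped. If no node $w$ with $id(w) = destID$ exists, the second alternative holds trivially. Otherwise let $w$ be that unique node. Assume for contradiction that some earlier \search{u, destID} message which had been waiting at $u$ in an admissible state $S_0$ was ultimately delivered rather than dropped. A delivery only occurs when $u$ processes a \psuccess{destID, dest'} message in some admissible state $S_1 \geq S_0$; by Invariant~4 we have $id(dest') = destID$, and uniqueness of identifiers forces $dest' = w$, together with $w \in R(u, destID)$ in $S_1$. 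Corollary~\ref{cor:once_in_R_then_always} then propagates this: $w \in R(u, destID)$ in every subsequent admissible state, including the current admissible state in which the \pfail{} is being processed. But Invariant~5 applied in this current state gives $w \notin R(u, destID)$, a contradiction. Hence every earlier \search{u, destID} message that was waiting at $u$ in an admissible state was already dropped, or is still in the current batch and will be dropped along with the current message.

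The main obstacle I anticipate is the sequence-number bookkeeping: I must carefully argue that the \pfail{} whose processing drops the current message really is a response to the currently active batch, so that Invariant~5 genuinely constrains the admissible state under consideration, and symmetrically that any earlier ``delivered'' search message came from a \psuccess{} whose admissible state $S_1$ strictly precedes the current one, so that the forward-only monotonicity of Corollary~\ref{cor:once_in_R_then_always} can be invoked. Once this batch-tracking is done, the rest is a direct unwinding of the generic search protocol against the previously established invariants.
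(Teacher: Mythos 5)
Your overall decomposition matches the paper's (first establish that the probing terminates and is answered, then case-split on \psuccess{} versus \pfail{} using Invariants~4 and~5); the first stage and the \psuccess{} case are essentially the paper's argument. The gap is in the final step of the \pfail{} case. Invariant~5 only asserts $v \notin R(u,destID)$ \emph{for admissible states with $u.seq[destID] < seq$}, i.e., for states belonging to strictly earlier probe batches. The state in which the \pfail{destID,seq} is processed and causes the drop is not such a state: the probe was created with $seq = u.seq[destID]$, that counter is monotonically increasing, and the handler only acts when $seq \geq u.seq[destID]$, so at processing time $u.seq[destID] = seq$. Hence Invariant~5 says nothing about the current state, and the contradiction you derive --- ``$w \in R(u,destID)$ now by Corollary~\ref{cor:once_in_R_then_always}, but $w \notin R(u,destID)$ now by Invariant~5'' --- is not available. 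Indeed it cannot be: it is entirely possible that $w$ has become reachable from $u$ by the time the \pfail{} arrives (the topology may have grown since the probe was forwarded), and the lemma must still hold in that situation.

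The repair is to locate the contradiction in the past rather than the present, which is what the paper's proof does. If an earlier \search{u,destID} message was delivered in some admissible state $S'$, then $u$ processed a \psuccess{destID,dest} there, so Invariant~4 gives $dest \in R(u,destID)$ in $S'$; moreover $u$ incremented $u.seq[destID]$ upon that receipt, so by monotonicity of the counter $S'$ is a state with $u.seq[destID] < seq$, and Invariant~5 for the later \pfail{destID,seq} yields $dest \notin R(u,destID)$ in $S'$ --- the desired contradiction, with no need for Corollary~\ref{cor:once_in_R_then_always} at all. Note that your ``main obstacle'' paragraph has the bookkeeping backwards: Invariant~5 constrains the states \emph{before} the current batch, not the batch being answered. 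A smaller point: in stage one, Invariant~3a) only gives that distances in $Next$ are $\leq$ the current holder's, so the per-hop distance is non-increasing rather than strictly decreasing; termination follows because no node re-enters $Next$, or via the potential $\Psi$ as in Lemma~\ref{lem:in_admissible_probe_always_answered}, which you could simply invoke.
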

We define the potential $\Psi(U,ID) := \sum_{u \in U} n^{dist(u,ID)}$, where $n$ is the number of processes and prove the following lemma:
\begin{lemma}\label{lem:in_admissible_distance_decreasing}
 Consider an admissible state, if a node $u$ receives a \probe{v,destID, Next, seq} message and $R(Next,destID) \setminus \{u\} \neq \emptyset$, then $u$ sends a \probe{v,destID, Next', seq} message to some other node, 
 and $\Psi(Next', destID) < \Psi(Next, destID)$.
\end{lemma}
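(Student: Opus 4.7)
The plan is to trace through the action triggered by the receipt of the \probe{} message and separately establish (a) that a new \probe{} message is actually sent and (b) that the potential strictly decreases.

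First I would use Invariant~3a) from the hypothesis that the state is admissible: it guarantees $u \in Next$ and that $u$ attains the maximum distance to $destID$ over all nodes in $Next$. Thus, in the action executed upon receipt of the message, $u$ computes
\[ Next' \;=\; (Next \setminus \{u\}) \cup N_u, \qquad N_u := \{\,w : w \text{ is a neighbor of } u,\; dist(id(w), destID) < dist(id(u), destID)\,\}. \]
To show that $u$ sends a \probe{} (and not a \pfail{}), I need $Next' \neq \emptyset$. Distinguish two cases. If $|Next| \ge 2$, then $Next \setminus \{u\}$ is already nonempty. Otherwise $Next = \{u\}$, so $R(Next, destID) = R(u, destID)$; by hypothesis $R(u, destID) \setminus \{u\} \neq \emptyset$, and by definition of $R$, any path from $u$ to another node in $R(u, destID)$ must begin with an explicit edge $(u,w)$ with $dist(id(w), destID) < dist(id(u), destID)$, so $w \in N_u \subseteq Next'$. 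Hence $Next' \neq \emptyset$ and $u$ forwards a \probe{} to $\arg\max\{dist(id(x), destID) : x \in Next'\}$.

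Next I would compute the change in potential. Let $d_u := dist(id(u), destID)$ and $A := Next \setminus \{u\}$, $B := N_u \setminus A$. Then $Next' = A \cup B$ (disjoint union), so
\[ \Psi(Next, destID) - \Psi(Next', destID) \;=\; n^{d_u} \;-\; \sum_{x \in B} n^{dist(id(x), destID)}. \]
By definition of $N_u$, every $x \in B$ satisfies $dist(id(x), destID) \le d_u - 1$, so each term in the sum is at most $n^{d_u - 1}$. Since $B$ consists of neighbors of $u$ (distinct from $u$), $|B| \le n - 1$. Therefore the sum is at most $(n-1)\, n^{d_u - 1} < n \cdot n^{d_u - 1} = n^{d_u}$, giving $\Psi(Next', destID) < \Psi(Next, destID)$.

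The main obstacle I foresee is case (a), ensuring $Next' \neq \emptyset$: this is where the hypothesis $R(Next, destID) \setminus \{u\} \neq \emptyset$ is essential, and the argument must lean on the fact that the first edge on any distance-decreasing path from $u$ is itself distance-decreasing, so it lands in $N_u$. Everything else is a clean counting argument exploiting that each ``new'' node in $Next'$ lies at a strictly smaller distance and that $u$ has fewer than $n$ neighbors, which is exactly what the base-$n$ exponential potential is designed to handle.
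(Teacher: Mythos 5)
Your proof is correct and follows the same approach as the paper's (which is a two-line argument: $Next'$ is $Next\setminus\{u\}$ augmented by strictly closer neighbors, hence $\Psi$ drops). Your version is in fact more complete: the paper's proof silently skips both the argument that $Next'\neq\emptyset$ (needed for the ``sends a \probe{}'' part of the claim, and exactly where the hypothesis $R(Next,destID)\setminus\{u\}\neq\emptyset$ is used) and the explicit counting bound $(n-1)\,n^{d_u-1}<n^{d_u}$, both of which you supply; note only that your step $dist(id(x),destID)\le d_u-1$ presupposes integer-valued distances, the same implicit assumption the paper's choice of potential already relies on.
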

\begin{proof}
 According to the protocol, when $u$ receives a \probe{v, destID, Next, seq} message, $u$ chooses $Next'$ as $Next \setminus \{u\}$ augmented by neighbors whose distance to $distID$ is smaller than $dist(u,distID)$.
 Thus, $\Psi(Next', destID) < \Psi(Next, destID)$.
\end{proof}

This enables us to prove:
\begin{lemma}\label{lem:in_admissible_probe_always_answered}
 If a node $v$ initiates a \probe{v, destID, Next, seq} message in an admissible state, then $v$ eventually receives either a \psuccess{destID, dest} message for some node $dest$ or a \pfail{destID, seq} message.
\end{lemma}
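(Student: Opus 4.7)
The plan is to argue by a potential argument using the quantity $\Psi$ from Lemma~\ref{lem:in_admissible_distance_decreasing}, combined with fair message receipt. Consider the probe $\probe{v, destID, Next_0, seq}$ that $v$ initiates and sends to some node $u_1$. By fair message receipt, $u_1$ eventually processes it. Whenever a node $u$ processes a $\probe{v, destID, Next, seq}$ message, inspecting the generic search protocol shows that exactly one of three things happens: (i) $id(u)=destID$, in which case $u$ sends a $\psuccess{destID,u}$ message to $v$; (ii) the updated set $Next' = (Next\setminus\{u\}) \cup \{w \in N(u) : dist(id(w),destID) < dist(id(u),destID)\}$ is empty, in which case $u$ sends $\pfail{destID,seq}$ to $v$; or (iii) $Next' \neq \emptyset$ and $u$ forwards a new $\probe{v,destID,Next',seq}$ to the node in $Next'$ farthest from $destID$. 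In cases (i) and (ii), fair message receipt guarantees that $v$ eventually receives the corresponding $\psuccess{}$ or $\pfail{}$ message.

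It remains to show that case (iii) cannot recur indefinitely. To do so I would first verify that the precondition $R(Next,destID)\setminus\{u\} \neq \emptyset$ of Lemma~\ref{lem:in_admissible_distance_decreasing} holds whenever the probe is forwarded. Since the probe is being forwarded, $Next' \neq \emptyset$, and by construction each member of $Next'$ lies in $R(Next,destID)\setminus\{u\}$: elements of $Next\setminus\{u\}$ are trivially contained in $R(Next,destID)$ and distinct from $u$, while any neighbor $w$ of $u$ with $dist(id(w),destID) < dist(id(u),destID)$ is reachable from $u \in Next$ via a single distance-decreasing hop (and differs from $u$). Hence $R(Next,destID)\setminus\{u\} \supseteq Next' \neq \emptyset$, so Lemma~\ref{lem:in_admissible_distance_decreasing} applies and yields $\Psi(Next',destID) < \Psi(Next,destID)$.

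Following the successive probes along their forwarding path, we therefore obtain a strictly decreasing sequence of non-negative integer values of $\Psi$. Any such sequence must be finite, so the chain of forwardings terminates after finitely many hops. Termination can only occur because case (i) or case (ii) triggers at some node $u_k$, at which point a $\psuccess{destID,dest}$ or $\pfail{destID,seq}$ message is sent to $v$. Fair message receipt then delivers this response to $v$, completing the proof.

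The main obstacle I anticipate is the precondition check for Lemma~\ref{lem:in_admissible_distance_decreasing} at each forwarding step; once that containment $Next' \subseteq R(Next,destID)\setminus\{u\}$ is clearly spelled out, the remainder is a routine well-foundedness argument on $\mathbb{N}$ combined with fair message receipt. Note that admissibility is used only to invoke Lemma~\ref{lem:in_admissible_distance_decreasing}, whose hypothesis already assumes an admissible state.
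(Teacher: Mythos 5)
Your proposal is correct and follows essentially the same route as the paper: invoke Lemma~\ref{lem:in_admissible_distance_decreasing} to get a strictly decreasing, bounded-below potential $\Psi$ along the chain of forwarded probes, conclude the chain terminates, and note that a non-forwarding node must emit a \psuccess{} or \pfail{} that fair message receipt delivers to $v$. Your explicit verification that $Next' \subseteq R(Next,destID)\setminus\{u\}$ (so the precondition of that lemma holds at every forwarding step) is a detail the paper leaves implicit, but it does not change the argument.
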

\begin{proof}
    Lemma~\ref{lem:in_admissible_distance_decreasing} yields that each \probe{v, destID, Next', seq} message eventually causes a \probe{v, destID, Next'', seq} message with $\Psi(Next'', destID) < \Psi(Next', destID)$.
    Moreover, by definition, $\Psi(Next', destID)$ can not increase for a \probe{v,destID, Next', seq} while it is in an incoming channel of a node.
    By induction and because $\Psi$ is bounded below, this yields that there must a node $w$ that does not send a new \probe{} message upon receipt of a \probe{v, destID, Next', seq} message caused by the original \probe{v, destID, Next, seq} message sent by $v$.
    According to the protocol, if $w$ does not send such a message, it sends either a \psuccess{destID, w} message or a \pfail{destID, seq} message to $v$.
\end{proof}

Using this, we prove Lemma~\ref{lem:in_admissible_search_correct}:
\begin{proof}
    Assume there is an admissible state $S$ and a node $u$ that has a \search{u,destID} message $m$ waiting at $u$ in $S$.
     Node $u$ initiates a \probe{u, destID, Next, seq} message every time it executes \timeout.
     According to Lemma~\ref{lem:in_admissible_probe_always_answered}, $u$ will eventually receive a \psuccess{destID, dest} or a \pfail{destID, seq} message.
     Note that $u$ forwards or drops $m$ upon the first receipt of such message after $S$.
     
     First, consider the case that the first such message that $u$  receives after state $S$  is a \psuccess{destID, dest} message.
     Invariant~4 (which holds due to Lemma~\ref{lem:admissible_suffix}) yields $id(dest) = destID$ and the $m$ message will be sent to $dest$, according to the protocol.
     
     Second, consider that the first such message that $u$ receives is a \pfail{destID, seq} message.
     According to Invariant~5, either no node $v$ with $id(v) = destID$ exists(in which case we are finished) or for every admissible state with $u.seq[destID] < seq, v \notin R(u,destID)$.
     Now consider an arbitrary earlier \search{u,destID} message $m'$ that has been waiting at $u$.
     If $m'$ is still waiting at $u$ in state $S$, then $m'$ will be dropped together with $m$ when $u$ receives the \pfail{destID, seq} message.
     Otherwise, assume for contradiction that in an admissible state $S'$, $m'$ was sent to a node $dest$ with $id(dest) = destID$.
     According to the protocol, this requires that $u$ received a \psuccess{destID,dest}.
     By Invariant~4, $dest \in R(u,destID)$ held in this state.
     Additionally, $u$ increased $u.seq[destID]$ upon receipt of that message.
     Since the sequence numbers are monotonically increasing, $S'$ is a state with $u.seq[destID] < seq$.
     Thus, Invariant~5 of the \pfail{destID, seq} message implies $dest \notin R(u,destID)$ in state $S'$, yielding a contradiction.
     This finishes the proof.        
\end{proof}

Last, we prove the following lemma:
\begin{lemma}\label{lem:u_in_R_then_delivered}
 If there is a node $v \in R(u,id(v))$ for a node $u$ in an admissible state $S$, then there will be a state $S' \geq S$ such that all \search{u,id(v)} messages initiated in $S'$ and all subsequent states will be delivered to $v$.
\end{lemma}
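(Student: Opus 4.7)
My approach is to exploit Corollary~\ref{cor:once_in_R_then_always} to conclude that $v \in R(u, id(v))$ holds in every state after $S$, and then combine this with Invariants~5 and~4 to show that every search initiated sufficiently late at $u$ for $id(v)$ is answered by a \psuccess message pointing to $v$.

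I first pick $S'$ as any admissible state after $S$ (which exists by Lemma~\ref{lem:admissible_suffix}) in which $WaitingFor[id(v)] = \emptyset$ holds at $u$; if $WaitingFor[id(v)]$ is already empty in $S$, I take $S' = S$, and otherwise Lemma~\ref{lem:in_admissible_search_correct} guarantees that the messages currently buffered at $u$ are eventually all delivered or dropped, yielding such a state. For any \search{u, id(v)} message $m$ initiated in some state $T \geq S'$, $m$ belongs to a batch that was opened by an \initsearch{id(v)} call at some state $T'$ with $S' \leq T' \leq T$, since $WaitingFor[id(v)]$ was empty at $S'$. Let $k$ denote the sequence number of that batch. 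By the protocol, $k$ equals the value of $u.seq$ immediately after the increment performed in \initsearch at $T'$, so $u.seq[id(v)] \leq u.seq < k$ holds in every state up to and including $S' \geq S$; in particular, $u.seq[id(v)] < k$ in the admissible state $S$.

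Now comes the main argument: by Lemma~\ref{lem:in_admissible_probe_always_answered}, any probe for $id(v)$ with sequence number $k$ that $u$ initiates in a \timeout after $T$ (which sits in an admissible state by Lemma~\ref{lem:admissible_suffix}) is eventually answered at $u$ by a \psuccess or \pfail response. A \pfail{id(v), k} response is impossible: were it ever in $u.Ch$ in an admissible state, Invariant~5 would force $v \notin R(u, id(v))$ at $S$, contradicting the hypothesis. Hence the response must be a \psuccess{id(v), dest}, and by Invariant~4 together with the uniqueness of identifiers, $dest = v$. Upon receipt, $u$ forwards every message in $WaitingFor[id(v)]$, including $m$, to $v$, so $m$ is delivered to $v$. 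The main technical hurdle is the bookkeeping around sequence numbers: stale probes and batches from before $S$ could in principle still produce responses, so one must choose $S'$ carefully so that every new batch has a sequence number strictly larger than $u.seq[id(v)]$ in $S$, which is precisely what makes Invariant~5 applicable to rule out \pfail; once this is set up, the rest of the proof reduces to the already established lemmas on the probing process.
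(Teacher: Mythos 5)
Your proposal is correct and follows essentially the same route as the paper's proof: clear the backlog of buffered searches via Lemma~\ref{lem:in_admissible_search_correct}, observe that the relevant sequence number has strictly increased past its value in $S$, and then use Invariant~5 to rule out a \pfail{} response (since it would contradict $v \in R(u,id(v))$ in the admissible state $S$), leaving only a \psuccess{} that, by Invariant~4, points to $v$. Your version merely makes the sequence-number bookkeeping and the appeal to Lemma~\ref{lem:in_admissible_probe_always_answered} explicit where the paper leaves them folded into Lemma~\ref{lem:in_admissible_search_correct}.
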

\begin{proof}
 Assume $v \in R(u,id(v))$ in $S$.
 Note that all \search{u,id(v)} messages waiting at $u$ are either delivered or dropped in some state $S'$ that is subsequent to $S$, according to Lemma~\ref{lem:in_admissible_search_correct}.
 Furthermore, in $S'$, $u.seq[id(v)]$ will have increased by one (compared to $S$).
 Now consider an arbitrary \search{u,id(v)} message $m$ initiated in $S'$ or any subsequent state.
 By Lemma~\ref{lem:in_admissible_search_correct} this will be delivered or dropped.
 Assume for contradiction it is dropped.
 This must have happened due to a \pfail{destID,seq} message (with $seq > u.seq[id(v)]$ in $S$).
 Thus, Invariant~5 for this message yields a contradiction to $v \in R(u,id(v))$ in $S$.
 All in all, we have that $m$ is delivered correctly.
\end{proof}

Using Lemma~\ref{lem:admissible_suffix}, Lemma~\ref{lem:in_admissible_search_correct} and Lemma~\ref{lem:u_in_R_then_delivered}, we can prove Theorem~\ref{thm:mss:satisfies:ntms}:
\begin{proof}
 Note that Lemma~\ref{lem:admissible_suffix} yields that in every computation of an \mss protocol that starts from an admissible state, there is an admissible state, and that in every computation of an \mss protocol there is an admissible state.
 Thus, it remains to prove that the protocol satisfies monotonic searchability in computations that start from admissible states.
 
 Consider an arbitrary computation that starts from an admissible state and assume there are two \search{u,destID} messages $m$ and $m'$ initiated in a node $u$ such that $m$ has been initiated before $m'$ and $m$ successfully reaches its target, but $m'$ is dropped.
 This is a contradiction to Lemma~\ref{lem:in_admissible_search_correct}, 
 Thus $m'$ must reach its target as well.
 
 Last, note that by definition, in legitimate states, $E(u,v) \Leftrightarrow R(u, id(v))$ for each pair of nodes $u,v$ and $ENG$ forms the desired topology.
 That is, for every two nodes $u$ and $v$ such that there is path from $u$ to $v$ in the target topology, $v \in R(u, id(v))$.
 Thus, Lemma~\ref{lem:u_in_R_then_delivered} yields that the monotonic searchability is non-trivial, which completes the proof of Theorem~\ref{thm:mss:satisfies:ntms}.
\end{proof}

\section{Examples}\label{sec:examples}
In this section, we give some examples of topologies for which a self-stabilizing protocol that satisfies non-trivial monotonic searchability according to the generic search protocol can be obtained by combining existing results in the literature with the results presented in this work.
\footnote{Note that some of the results we cite have been obtained for a synchronous model. However, the results easily transfer to the asynchronous model.}
By Corollary~\ref{cor:transformable} and Theorem~\ref{thm:mss:satisfies:ntms}, all that needs to be shown is that the target topology is strongly connected, that $E(u,v) \Leftrightarrow R(u,id(v))$ holds and that the protocols we want to adapt fulfill the \mdl property.

\subsection{The linear list}
The line topology is the first topology for which a protocol satisfying non-trivial monotonic searchability was shown \cite{SetzerSSS14}.
Now, using the universal approach described in this paper and applying it to the self-stabilizing list protocol presented by Nor, Nesterenko, and Scheideler (called $l$-Corona in \cite{Nor13Corona}), the non-trivial monotonic searchability follows as a corollary.

First of all, it is obvious that the linear list is strongly connected and that it holds $E(u,v) \Leftrightarrow R(u,id(v))$.
Second, note that according to the protocol in \cite{Nor13Corona}, whenever each node $u$ receives a new reference $v$ with $id(u) < id(v)$ ($id(v) < id(u)$), $u$ keeps the reference if it is closer to its previous right (left) neighbor $w$, in which case it also introduces $w$ to $v$, or delegates the reference to its previous right (left) neighbor, otherwise.
In a legitimate state, every reference is simply delegated into one direction until it finally fuses with an already existing edge.
Thus, the protocol fulfills the \mdl property.

\subsection{The \skipp graph}
The \skipp graph was introduced by Jacob, Richa, Scheideler, Schmid, and T{\"{a}}ubig \cite{JRSST09} as a supergraph of the skip graph to overcome the issue that the consistency of the original skip graph \cite{Normalskip} is not locally checkable.
For a formal definition of the \skipp graph and the self-stabilizing protocol for it, we refer the reader to that paper.

Obviously, the \skipp is strongly connected.
Besides, $E(u,v) \Leftrightarrow R(u,id(v))$ is easy to see because the \skipp graph contains as a subgraph the linear list (on level $0$). 
Thus, all that remains to be shown is that the protocol presented in \cite{JRSST09} fulfills the \mdl property.
We will argue about every of the four sub-properties individually.
To see that (i) is fulfilled, note that the protocol has a notion of stable and temporary edges that is close to ours and that stable edges are never delegated away (since the only Rule that delegates away an edge is Rule~2).
That is, each node always keeps references of nodes that appear to belong to the final topology (from its local view).
By the definition of a node's neighborhood, having a neighbor $w$ that does not belong to the final topology cannot cause an edge $(u,v)$ to appear to not belong to the final topology for a node $u$.
Thus, (i) is not violated.
The argument for (ii) is similar: Since a node always keeps its ``closest'' neighbors, receiving a previously known reference again cannot cause a node to add the corresponding node to its neighborhood.
Thus, (ii) is also fulfilled.
Sub-property (iii) is easy to see and for (iv) and (v), observe that any implicit edge $(u,v)$ in a legitimate state is always delegated to the neighbor that shares the largest common prefix with $v.rs$.
By the definition of the \skipp graph, this leads to this edge finally reaching a neighbor of $v$ at which is is fused.
All in all, the protocol fulfills the \mdl property and monotonic searchability is possible using our approach for the \skipp graph.

\section{Conclusion and Outlook}\label{sec:conclusion}
In this work we further strengthened the notion of monotonic searchability introduced in \cite{SetzerSSS14} by presenting a universal approach for adapting conventional protocols for topological self-stabilization such that they satisfy monotonic searchability.
Even more, we carved out some design principles that protocols should adhere to in order to enable reliable searches even during the stabilization phase.

Although our results solve the problem of monotonic searchability for a wide range of topologies, there are certain aspects that have not been studied yet.
For example, we did not consider the additional cost of convergence (i.e., the amount of additional messages to be sent), nor the impact of our methods on the convergence time of the topology.
Additionally, while our generic search protocol enables us to search existing nodes in legitimate states with a low dilation, searching for a non-existing node can still cause a message to travel $\Omega(n)$ hops, even in a legitimate state.
Whether this is provably necessary or could be improved is still an open question.

Last, in~\cite{SetzerSSS14} a solution for the linear list was presented that maintains monotonic searchability even under join/leave dynamics in terms of the Finite Departure Problem.
For all other topologies (and also for our general approach), such a solution is still missing.

%%
%% Bibliography
%%
\bibliography{bibliography.bib}

\begin{thebibliography}{10}

\bibitem{Normalskip}
James Aspnes and Gauri Shah.
\newblock Skip graphs.
\newblock {\em {ACM} Trans. Algorithms}, 3(4), 2007.

\bibitem{AspnesW07}
James Aspnes and Yinghua Wu.
\newblock O(logn)-time overlay network construction from graphs with out-degree
  1.
\newblock In {\em Principles of Distributed Systems, 11th International
  Conference, {OPODIS} 2007, Guadeloupe, French West Indies, December 17-20,
  2007. Proceedings}, pages 286--300, 2007.

\bibitem{DBLP:journals/tcs/BernsGP13}
Andrew Berns, Sukumar Ghosh, and Sriram~V. Pemmaraju.
\newblock Building self-stabilizing overlay networks with the transitive
  closure framework.
\newblock {\em Theor. Comput. Sci.}, 512:2--14, 2013.

\bibitem{DBLP:journals/dc/BuiDPV07}
Alain Bui, Ajoy~Kumar Datta, Franck Petit, and Vincent Villain.
\newblock Snap-stabilization and {PIF} in tree networks.
\newblock {\em Distributed Computing}, 20(1):3--19, 2007.

\bibitem{DBLP:journals/jpdc/DelaetDNT10}
Sylvie Dela{\"{e}}t, St{\'{e}}phane Devismes, Mikhail Nesterenko, and
  S{\'{e}}bastien Tixeuil.
\newblock Snap-stabilization in message-passing systems.
\newblock {\em J. Parallel Distrib. Comput.}, 70(12):1220--1230, 2010.

\bibitem{Dijkstra74}
Edsger~W. Dijkstra.
\newblock Self-stabilizing systems in spite of distributed control.
\newblock {\em Commun. {ACM}}, 17(11):643--644, 1974.

\bibitem{DBLP:journals/cjtcs/DolevH97}
Shlomi Dolev and Ted Herman.
\newblock Superstabilizing protocols for dynamic distributed systems.
\newblock {\em Chicago J. Theor. Comput. Sci.}, 1997, 1997.

\bibitem{DolevK08}
Shlomi Dolev and Ronen~I. Kat.
\newblock Hypertree for self-stabilizing peer-to-peer systems.
\newblock {\em Distributed Computing}, 20(5):375--388, 2008.

\bibitem{DolevT2013}
Shlomi Dolev and Nir Tzachar.
\newblock Spanders: Distributed spanning expanders.
\newblock {\em Sci. Comput. Program.}, 78(5):544--555, 2013.

\bibitem{self-stabilizing-list2}
Dominik Gall, Riko Jacob, Andr{\'{e}}a~W. Richa, Christian Scheideler, Stefan
  Schmid, and Hanjo T{\"{a}}ubig.
\newblock A note on the parallel runtime of self-stabilizing graph
  linearization.
\newblock {\em Theory Comput. Syst.}, 55(1):110--135, 2014.

\bibitem{JRSST09}
Riko Jacob, Andr{\'{e}}a~W. Richa, Christian Scheideler, Stefan Schmid, and
  Hanjo T{\"{a}}ubig.
\newblock Skip\({}^{\mbox{+}}\): {A} self-stabilizing skip graph.
\newblock {\em J. {ACM}}, 61(6):36:1--36:26, 2014.

\bibitem{JacobRSS2012}
Riko Jacob, Stephan Ritscher, Christian Scheideler, and Stefan Schmid.
\newblock Towards higher-dimensional topological self-stabilization: A
  distributed algorithm for delaunay graphs.
\newblock {\em Theor. Comput. Sci.}, 457:137--148, 2012.

\bibitem{DBLP:conf/europar/JohnenM10}
Colette Johnen and Fouzi Mekhaldi.
\newblock Robust self-stabilizing construction of bounded size weight-based
  clusters.
\newblock In {\em Euro-Par 2010 - Parallel Processing, 16th International
  Euro-Par Conference, Ischia, Italy, August 31 - September 3, 2010,
  Proceedings, Part {I}}, pages 535--546, 2010.

\bibitem{DBLP:conf/ipps/KakugawaM06}
Hirotsugu Kakugawa and Toshimitsu Masuzawa.
\newblock A self-stabilizing minimal dominating set algorithm with safe
  convergence.
\newblock In {\em 20th International Parallel and Distributed Processing
  Symposium {(IPDPS} 2006), Proceedings, 25-29 April 2006, Rhodes Island,
  Greece}, 2006.

\bibitem{KniesburgesKS12}
Sebastian Kniesburges, Andreas Koutsopoulos, and Christian Scheideler.
\newblock A self-stabilization process for small-world networks.
\newblock In {\em 26th {IEEE} International Parallel and Distributed Processing
  Symposium, {IPDPS} 2012, Shanghai, China, May 21-25, 2012}, pages 1261--1271,
  2012.

\bibitem{KoutsopoulosSS15}
Andreas Koutsopoulos, Christian Scheideler, and Thim Strothmann.
\newblock Towards a universal approach for the finite departure problem in
  overlay networks.
\newblock In {\em Stabilization, Safety, and Security of Distributed Systems -
  17th International Symposium, {SSS} 2015, Edmonton, AB, Canada, August 18-21,
  2015, Proceedings}, pages 201--216, 2015.

\bibitem{corona}
Rizal~Mohd Nor, Mikhail Nesterenko, and Christian Scheideler.
\newblock Corona: {A} stabilizing deterministic message-passing skip list.
\newblock {\em Theor. Comput. Sci.}, 512:119--129, 2013.

\bibitem{Nor13Corona}
Rizal~Mohd Nor, Mikhail Nesterenko, and Christian Scheideler.
\newblock Corona: {A} stabilizing deterministic message-passing skip list.
\newblock {\em Theor. Comput. Sci.}, 512:119--129, 2013.

\bibitem{self-stabilizing-list}
Melih Onus, Andr{\'{e}}a~W. Richa, and Christian Scheideler.
\newblock Linearization: Locally self-stabilizing sorting in graphs.
\newblock In {\em Proceedings of the Nine Workshop on Algorithm Engineering and
  Experiments, {ALENEX} 2007, New Orleans, Louisiana, USA, January 6, 2007},
  2007.

\bibitem{SetzerSSS14}
Christian Scheideler, Alexander Setzer, and Thim Strothmann.
\newblock Towards establishing monotonic searchability in self-stabilizing data
  structures.
\newblock In {\em Principles of Distributed Systems - 19th International
  Conference, {OPODIS} 2015, Proceedings}, 2015.

\bibitem{ShakerR05}
Ayman Shaker and Douglas~S. Reeves.
\newblock Self-stabilizing structured ring topology {P2P} systems.
\newblock In {\em Fifth {IEEE} International Conference on Peer-to-Peer
  Computing {(P2P} 2005), 31 August - 2 September 2005, Konstanz, Germany},
  pages 39--46, 2005.

\bibitem{YamauchiT10}
Yukiko Yamauchi and S{\'{e}}bastien Tixeuil.
\newblock Monotonic stabilization.
\newblock In {\em Principles of Distributed Systems - 14th International
  Conference, {OPODIS} 2010, Tozeur, Tunisia, December 14-17, 2010.
  Proceedings}, pages 475--490, 2010.

\end{thebibliography}

\end{document}